\theoremstyle{plain}
\theoremstyle{plain}
\newenvironment{proof}[1][\protect\proofname]{\par
	\normalfont\topsep6\p@\@plus6\p@\relax
	\trivlist
	\itemindent\parindent
	\item[\hskip\labelsep\scshape #1]\ignorespaces
}{%
	\endtrivlist\@endpefalse
}
\providecommand{\proofname}{Proof}
\def\beq{\begin{equation}}
\def\eeq{\end{equation}}
\def\bi{\begin{itemize}}
\def\ei{\end{itemize}}
	\def\ba{\begin{array}}
	\def\ea{\end{array}}
	\def\bfig{\begin{figure}}
	\def\efig{\end{figure}}
	\def\C{\mathbb{C}}
	\def\R{\mathbb{R}}
	\def\Z{\mathbb{Z}}
	\newtheorem{theorem}{Theorem}[section]
	\newtheorem{lemma}[theorem]{Lemma}
	\newcommand{\bA}{{\bar{A}}}
	\newcommand{\Slc}{\mathrm{SL}(2,\mathbb{C})}
	\newcommand{\Su}{\mathrm{SU}(2)}
	\def\be{\begin{eqnarray}}
	\def\ee{\end{eqnarray}}
	\newcommand{\cc}{\mathcal C}
	\newcommand{\cf}{\mathcal F}
	\newcommand{\cg}{\mathcal G}
	\newcommand{\ch}{\mathcal H}
	\newcommand{\cj}{\mathcal J}
	\newcommand{\ck}{\mathcal K}
	\newcommand{\cm}{\mathcal M}
	\newcommand{\calp}{\mathcal P}
	\newcommand{\cs}{\mathcal S}
	\newcommand{\ct}{\mathcal T}
	\newcommand{\cu}{\mathcal U}
	\newcommand{\cv}{\mathcal V}
	\newcommand{\sa}{\mathscr{A}}
	\newcommand{\sm}{\mathscr{M}}
	\newcommand{\ff}{\mathfrak{f}}
	\newcommand{\fl}{\mathfrak{l}}  
	\newcommand{\fn}{\mathfrak{n}}
	  \newcommand{\Fr}{\mathfrak{R}}
	\newcommand{\fs}{\mathfrak{s}}  \newcommand{\Fs}{\mathfrak{S}}
	\renewcommand{\a}{\alpha}
	\renewcommand{\b}{\beta}
	\newcommand{\g}{\gamma}
	\newcommand{\G}{\Gamma}
	\newcommand{\eps}{\varepsilon}
	\newcommand{\sig}{\sigma}
	\newcommand{\Sig}{\Sigma}
	\renewcommand{\l}{\lambda}
	\renewcommand{\L }{\Lambda}
	\renewcommand{\o}{\omega}
	\renewcommand{\O}{\Omega}
	\renewcommand{\t}{\tau}
	\newcommand{\rmd}{\mathrm d}
	\newcommand{\lt}{\left}
	\newcommand{\rt}{\right}
	\newcommand{\lag}{\left\langle}
	\newcommand{\rag}{\right\rangle}
	\newcommand{\sn}{\mathscr{N}}
	\newcommand{\re}{\mathrm{Re}}
	\newcommand{\tr}{\mathrm{Tr}}
\begin{document}

\title{On the summation and triangulation independence of Lorentzian spinfoam amplitudes for all LQG}

\author{Muxin Han}
\email{hanm(At)fau.edu}
\affiliation{Department of Physics, Florida Atlantic University, 777 Glades Road, Boca Raton, FL 33431, USA}
\affiliation{Institut f\"ur Quantengravitation, Universit\"at Erlangen-N\"urnberg, Staudtstr. 7/B2, 91058 Erlangen, Germany}

\begin{abstract}

This paper investigates the fundamental issue of triangulation dependence in spinfoam quantum gravity. It introduces a novel framework, named spinfoam stack, to systematically sum spinfoam amplitudes over an infinite class of 2-complexes. These complexes are generated by stacking an arbitrary number of faces upon a simpler root complex. The central result is obtained by analyzing the amplitude of spinfoam stack in the limit where an upper cut-off on the area of internal faces is taken to infinity. In this limit, the amplitude as an integral localizes via a stationary phase mechanism onto a critical manifold. This manifold is shown to be the space of SU(2) flat connections on the underlying complex. This localization effectively reduces the bulk dynamics from a theory of quantum geometry to a topological theory akin to SU(2) BF theory. For spinfoams on topologically trivial manifolds, this result has a powerful consequence: the spinfoam stack amplitude factorizes into a triangulation-dependent normalization factor and a finite part that depends only on the boundary data. Renormalizing the amplitude yields a finite result that is manifestly independent of the bulk structure of the 2-complex. This provides a concrete realization of triangulation independence in a well-defined limit, suggesting the possibility of existing a non-trivial fixed point of quantum gravity within the spinfoam formalism.
\end{abstract}

\maketitle

\tableofcontents

\section{Introduction}

Loop Quantum Gravity (LQG) offers a compelling framework for a non-perturbative and background-independent theory of quantum gravity \cite{rovelli2004quantum, ashtekar2004background, thiemann2008modern, Perez2012,han2007fundamental}. In the canonical formulation of LQG, the fundamental quantum geometry excitations of space are described by spin-network states \cite{rovelli1988knot, rovelli1995discreteness}. These are graphs embedded in a spatial manifold, with links colored by representations of the SU(2) group and nodes by intertwiners. These states form an orthonormal basis for the kinematical Hilbert space $\mathcal{H}_{\text{Kin}}$ \cite{Ashtekar:1994wa, uniqueness}, and geometric operators for area and volume are found to have discrete spectra \cite{rovelli1995discreteness, ashtekar1997quantum, ashtekar1997quantumII}. This discreteness of geometry at the Planck scale is a cornerstone prediction of the theory.

While the canonical approach provides a detailed picture of quantum geometry at a moment in time, a complete theory must also describe its evolution. The covariant, or path integral, formulation of LQG, known as spinfoam theory, aims to provide this dynamical description \cite{Reisenberger:1996pu, Baez2000, Perez2012, rovelli2014covariant}. In this framework, the transition amplitude between initial and final spin-network states is computed as a "sum over histories" of quantum geometries. A single history, or spinfoam, is based on a 2-complex, which can be visualized as the spacetime evolution of a spin-network graph. The links of the spin-network evolve into faces of the 2-complex, and the nodes evolve into edges. The evolution carries the spins/intertwiners from links/nodes to edges/faces. A spinfoam is constructed by assigning spins to the faces and intertwiners to the edges. A spinfoam model associates a complex weight for each spinfoam, a spinfoam amplitude based on a 2-complex sums the weights over all assignments of spins and intertwiners on the complex.

As popular spinfoam model in this program, the Engle-Pereira-Rovelli-Livine (EPRL) model \cite{EPRL} provides a concrete prescription for the spinfoam amplitude for Lorentzian gravity. The model describes the dynamics of arbitrary states in the Hilbert space $\ch_{\rm Kin}$ by the Kaminski-Kisielowski-Lewandowski (KKL) formalism \cite{KKL,generalize}. There has been extensive investigations on the behavior of the model in the semiclassical regime of large spins (see e.g \cite{semiclassical, HZ,Han:2023cen,Han:2021kll,claudio,propagator3}). The EPRL spinfoam amplitude has been shown to relate to the Regge calculus in this regime, providing strong evidence that the model correctly captures the dynamics of discrete general relativity.

Despite these successes, a major unresolved issue in spinfoam theory is the dependence of the amplitudes on the underlying 2-complex. Physical predictions should be independent of this auxiliary discretization, which is chosen for computational convenience. The standard proposal to address this is to sum the amplitudes over all possible 2-complexes compatible with the given boundary conditions \cite{Reisenberger:1996pu}. This is analogous to summing over all Feynman diagrams in quantum field theory or over all triangulations in dynamical triangulation models. However, this sum over complexes has been notoriously difficult to define and control. Group Field Theory (GFT) provides a formal framework for organizing this sum as a perturbative expansion, where the GFT Feynman diagrams correspond to the 2-complexes \cite{Oriti2007}. While GFT has yielded significant insights, how to compute the sum remains to be a difficult problem. A direct, non-perturbative understanding of the sum over complexes within the spinfoam formalism itself is highly desirable, and is closely connected to renormalization in spinfoam quantum gravity \cite{Dittrich:2014ala, Bahr:2017klw}.

This paper introduces a novel framework to systematically perform a sum over complexes and investigates its properties in a specific, physically motivated limit. We propose the concept of a ``spinfoam stack'', which organizes the sum over an infinite class of 2-complexes. The construction begins with a simple "root complex" $\mathcal{K}$, where each loop in the 1-skeleton bounds at most one face. A family of more intricate complexes is then generated by ``stacking'' an arbitrary number of faces, $p_f$, upon each face $f$ of the root complex. The stack amplitude (based on the root complex $\ck$) is then defined as a sum over these stacked complexes, weighted by coupling constants $\lambda_f$ associated with each root face $f$. The construction of spinfoam stack is motivated by the structure of the LQG Hilbert space, where generic states can be represented as linear combinations of ``spin-network stacks'' that are superpositions of spin-networks on graphs with varying link multiplicities. The spinfoam stack naturally describes the covariant evolution of such states. Finally, the complete amplitude sums the stack amplitudes over different root complexes sharing the same boundary.

To render the sum over an infinite number of spin configurations well-defined, we introduce a regularization by imposing an upper bound, $A_f$, on the total LQG area given by the spins on the faces stacked upon each root face $f$. This cut-off is inspired by physical considerations, such as the presence of a cosmological horizon, which sets a maximal observable area. The central focus of this work is to analyze the behavior of the stack amplitude in the limit where the area cut-offs of internal faces are taken to be large, $A_f \to \infty$ (keeping boundary areas fixed). 


Our main result is that in this large-cutoff limit, the stack amplitude of the Lorentzian EPRL spinfoam model undergo a remarkable simplification. The path integral over the $\mathrm{SL}(2,\mathbb{C})$ group variables, representing the stack amplitude, localizes via a stationary phase mechanism onto a critical manifold $\cc_{\rm int}$. For some details, the summation over the stack multiplicities $\{p_f\}$ can be computed by Laplace transform method, in a similar way as state-counting in computing LQG black hole entropy (see e.g. \cite{Agullo:2010zz,BarberoG:2008dee,Engle2011,lqgee1}). The large-cutoff limit enforces a sharp constraint on the amplitude. The sum is dominated by configurations that maximizes the real part of an effective action $S$ and satisfies $\partial S=0$. These configurations are the critical points of the effective action and form the critical manifold $\cc_{\rm int}$.

We demonstrate that the equations defining this critical manifold are precisely the conditions for reducing $\Slc$ group variables to SU(2) holonomies on the root complex, such that the loop holonomy around every face is trivial. Moreover, the critical manifold $\cc_{\rm int}$ quotient by the on-shell SU(2) gauge group $\cg_{\rm int}$ is shown to be isomorphic to the moduli space of SU(2) flat connections on the root complex. It is also isomorphic to the moduli space of SU(2) flat connections on the 4-manifold $\mathcal{M}_4$ where the root 2-complex $\ck$ is embedded, if $\ck$ is sufficiently refined.  Localizing the spinfoam path integral onto $\cc_{\rm int}$ effectively reduces the dynamics in the bulk from a theory of Lorentzian quantum geometry to a topological theory akin to SU(2) BF theory \cite{Ooguri:1992eb} (but with a different path integral measure). This result suggests a possible connection between the high-energy regime of spinfoam quantum gravity and topological quantum field theory.

This general result has a particularly powerful and concrete consequence for spinfoams on manifolds with trivial topology, where the root complex has a trivial fundamental group, $\pi_1(|\ck|) = \{1\}$, then the moduli space of flat connections trivially consists of a single point corresponding to the trivial connection. In this case, the entire path integral localizes to a single configuration and the on-shell gauge freedom. We show that the stack amplitude $\mathscr{A}_{\mathcal{K}}$ factorizes into a product
\begin{equation}
\mathscr{A}_{\mathcal{K}} = \mathscr{N}_{\mathcal{K}}  \mathscr{A}_{\Gamma,\fs}.
\end{equation}
Here, $\mathscr{N}_{\mathcal{K}}$ is a normalization factor that arises from the Gaussian integration over the fluctuations around the critical point. This factor depends on the bulk structure of the root complex $\mathcal{K}$. In the large-cutoff limit, this factor diverges. Crucially, the second factor $\mathscr{A}_{\Gamma,\fs}$ is finite and depends only on boundary data, independent of the choice of the bulk structure in $\mathcal{K}$. By defining a renormalized amplitude $\mathscr{A}_{\text{ren}} = \mathscr{A}_{\mathcal{K}} / \mathscr{N}_{\mathcal{K}}=\sa_{\G,\fs}$, we obtain a finite result that is manifestly independent of the bulk triangulation. This provides a concrete and powerful realization of triangulation independence in a well-defined limit of the Lorentzian spinfoam model. In addition, the further summation of $\sa_\ck$ over root complexes becomes simplified by this triangulation independence.

The physical interpretation of these results points towards a rich phase structure for spinfoam quantum gravity. In the large area cutoff limit, the theory relates to a topological, scale-invariant phase. The triangulation independence of the renormalized amplitude is a direct manifestation of this scale invariance; refining the triangulation corresponds to probing smaller scales, and the invariance of $\mathscr{A}_{\Gamma,\fs}$ suggests that the theory might have reached a non-trivial fixed point. This picture might share conceptual similarities with the Asymptotic Safety scenario for quantum gravity, where the theory is predicted to have a UV fixed point \cite{Eichhorn:2018phj}. Our result suggests a realization of the scenario within the spinfoam framework.

Furthermore, our framework can connect to the established semiclassical results of spinfoam gravity, which correspond to the theory's infrared (IR) regime. We argue that for finite area cutoffs $A_f$ and in a regime where the Barbero-Immirzi parameter is small, the spinfoam stack amplitude is no longer dominated by the above topological theory. Instead, the sum is dominated by the amplitude on the root complex where all $p_f=1$. This occurs because the coupling constants $\lambda_f$ should be small for small $\g$, suggested by the recent result of entanglement entropy from spinfoam \cite{spinfoamstack}, so $\lambda_f$ suppresses the amplitude on the complexes with $p_f>1$. This ensures that our framework connects to established semiclassical results based only on root complexes, including the correspondence between the spinfoam amplitude and the Regge calculus. 

The regime of these semiclassical results corresponds to both boundary spins and the internal spin cut-offs being uniformly large but finite. In contrast, the regime studied in this paper is the infinite internal cut-off limit $A_f\to\infty$ while keeping boundary state fixed. The semiclassical results and the results here should correspond to the behavior of the theory at two different regimes.

The organization of this paper is as follows. In Section \ref{Stacking spin-networks}, we introduce the concept of spin-network stacks as states in the LQG Hilbert space. In Section \ref{Spinfoam stack}, we extend this concept to the covariant picture, defining the spinfoam stack and the corresponding stack amplitude for the generalized EPRL model. Section \ref{Localization to the space of flat connections} contains our main analytical results. We use a Laplace transform and stationary phase methods to analyze the stack amplitude in the large area cutoff limit, demonstrating its localization to the space of SU(2) flat connections. Section \ref{Trivial topology and triangulation independence} proves the triangulation independence of the renormalized amplitude for topologically trivial manifolds. In Section \ref{Discussion}, we discuss the physical implications of our results, including the interpretation of the large-cutoff limit as a UV fixed point and the connection to the IR regime of the theory. Section \ref{Some explicit computations} provides some explicit computations for the case of trivial topology, including an explicit parametrization of the group variables and a proof of the non-degeneracy of the Hessian matrix governing the fluctuations around the critical manifold.

\section{Spin-network stack}\label{Stacking spin-networks}

Let $\G$ be a closed, oriented graph. The link multiplicity between any two nodes in $\G$ is the number of links connecting them. A spin-network state on $\G$ is defined by coloring each oriented link $\fl$ with a spin $j=k/2$ (with $k\in\Z_+$) and coloring each node $\fn$ with a normalized intertwiner $I_\fn$. We first assume that the multiplicities in $\G$ is less or equal to 1, in other words, any two nodes are connected at most by a single link if they are connected. Let us focus on a link $\fl$ that connects a source node $\fn_1=s(\fl)$ to a target node $\fn_2=t(\fl)$, the state can be expressed by
\be
\cdots\lt(I_{\fn_2}\rt)^{k;\cdots}_{m;\cdots} \Pi^{k}_{m,n}\lt(H_\fl\rt)\lt(I_{\fn_1}\rt)^{k;\cdots}_{n;\cdots}\cdots,\label{spinnetworkstate}
\ee
Here the normalized Wigner $D$-function of the SU(2) holonomy $H_\fl$ is denoted by $\Pi^{k}_{m,n}(H_\fl)=\sqrt{d_k} D^{k}_{m,n}(H_\fl)$, where $d_k=k+1$. The ellipses $\cdots$ represent quantities associated with links other than $\fl$. Contractions of the magnetic indices $m,n$ occur between the intertwiners and the Wigner $D$-functions.

A family of spin-network states can be generated from $\G$ by increasing the link multiplicities between any two neighboring nodes, such as $\fn_1$ and $\fn_2$ (two nodes are neighboring if the link multiplicity between these two nodes is one in $\G$). A typical state in this family takes the following form, when focusing on the links stacked upon $\fl$:
\be
\cdots\lt(I_{\fn_2}\rt)^{k_1\cdots k_p;\cdots}_{m_1\cdots m_p;\cdots}\prod_{i=1}^p \Pi^{k_i}_{m_i,n_i}\lt(H_{\fl(i)}\rt)\lt(I_{\fn_1}\rt)^{k_1\cdots k_p;\cdots}_{n_1\cdots n_p;\cdots}\cdots.
\ee
In this state, a total of $p$ links, denoted $\fl(i)$ for $i=1,\cdots,p$, connect nodes $\fn_1$ and $\fn_2$. An SU(2) holonomy $H_{\fl(i)}$ and a spin $k_i/2$ are carried by each individual link. The intertwiners $I_{\fn_{1}}$ and $I_{\fn_{2}}$ become higher-valent to handle the increased number of connections.

We consider a general superposition of the spin-networks in this family. The superposition sums over in the link multiplicities $p$, the collection of spins $\vec{k}=(k_1,\cdots,k_p)$, and the intertwiners, as depicted in FIG.\ref{sn_stack}. In order that the resulting state is normalizable, the summation is truncated by imposing a constraint: for an arbitrary $A_\fl>0$, the total LQG area contributed from the $p$ links between a pair of neighboring nodes is not permitted to exceed the cut-off value $4\pi\g\ell_P^2 A_\fl$. This constraint is imposed to every pair of neighboring nodes and to each state in the superposition. The resulting states is written as:
\be
\Psi_{\G,\vec{A}}\lt(\vec{H}\rt)&=&\sum_{\vec \mu}\sum_{\{I_\fn\}_\fn} C_{\vec \mu,\{I_\fn\}_\fn}\prod_{\fl\subset\G} \Theta\left(A_{\fl}-\alpha_{p_\fl,\vec{k}(\fl)}\right) \tr\lt( \bigotimes_{\fn\in\G} I_\fn \cdot \bigotimes_{\fl\subset\G}\lt[ \bigotimes_{i=1}^{p_\fl} \Pi^{k_i(\fl)}\lt(H_{\fl(i)}\rt)\rt]\rt),\label{spinnetworkstack}\\
\vec\mu&=&\lt(\lt\{p_\fl,\vec{k}(\fl)\rt\}\rt)_{\fl},\qquad 
\alpha_{p,\vec{k}}=\sum_{i=1}^{p}\sqrt{k_{i}(k_{i}+2)}.
\ee
We have use $\alpha_{p,\vec{k}}$ to denote the area spectrum. In the expression \eqref{spinnetworkstack}, $p_\fl \in \Z_+$ represents the link multiplicity associated to the original link $\fl$, while $k_i(\fl) \in \Z_+$ are their corresponding spins. The constraint is imposed by the Heaviside step function, $\Theta(x)$, defined as $\Theta(x)=1$ for $x\geq 0$ and $\Theta(x)=0$ for $x< 0$. The complete contraction of all magnetic indices is indicated by the trace $\tr$. The expression \eqref{spinnetworkstack} represents the state as a cylindrical function of holonomies. An alternative, representation-independent form of the state is given by:
 \be
|\Psi_{\G,\vec{A}}\rangle =\sum_{\vec{\mu}}\sum_{\{I_\fn\}_\fn}C_{\vec{\mu},\{I_\fn\}_{\fn}}\prod_{\fl\subset\G} \Theta\left(A_{\fl}-\alpha_{p_\fl,\vec{k}(\fl)}\right)\bigotimes_{\fn\in\G} |I_{\fn}\rangle.\label{stackcoarsegr}
\ee
This state $\Psi_{\G,\vec{A}}$ with general coefficients $C_{\vec{\mu},\{I_\fn\}_{\fn}}$ is termed a \emph{spin-network stack}. The original graph $\G$ upon which it is built is termed the \emph{root graph}. A spin-network state is a special case of spin-network stack, by setting all $C_{\vec{\mu},\{I_\fn\}_{\fn}}$ to vanish except one.

In the LQG Hilbert space $\ch_{\rm Kin}$ that includes all graphs, densely many state can be represented as a linear combination of spin-network stacks (with some $C_{\vec{\mu},\{I_\fn\}_{\fn}}$ and $\vec A$) based on different root graphs. If we denote by $\ch_{\G,\rm{st}}$ the Hilbert space of all spin-network stacks on the root graph $\G$ (with arbitrarily large cut-offs), the LQG Hilbert space $\ch_{\rm Kin}$ can be decomposed into
\be
\ch_{\rm Kin}=\bigoplus_{\G}\ch_{\G,\rm{st}}.
\ee
In contrast to the standard spin-network decomposition of $\ch_{\rm Kin}$, the direct sum is only over root graphs whose link multiplicities are not greater than one.

The graphs in the stack share the same set of nodes as the root graph $\G$, and the links are stacked upon links in $\G$, so we can use the links and nodes in the root graph to label the quantities in the spin-network stack \eqref{stackcoarsegr}.

\begin{figure}[t]
\centering
\includegraphics[width=0.5\textwidth]{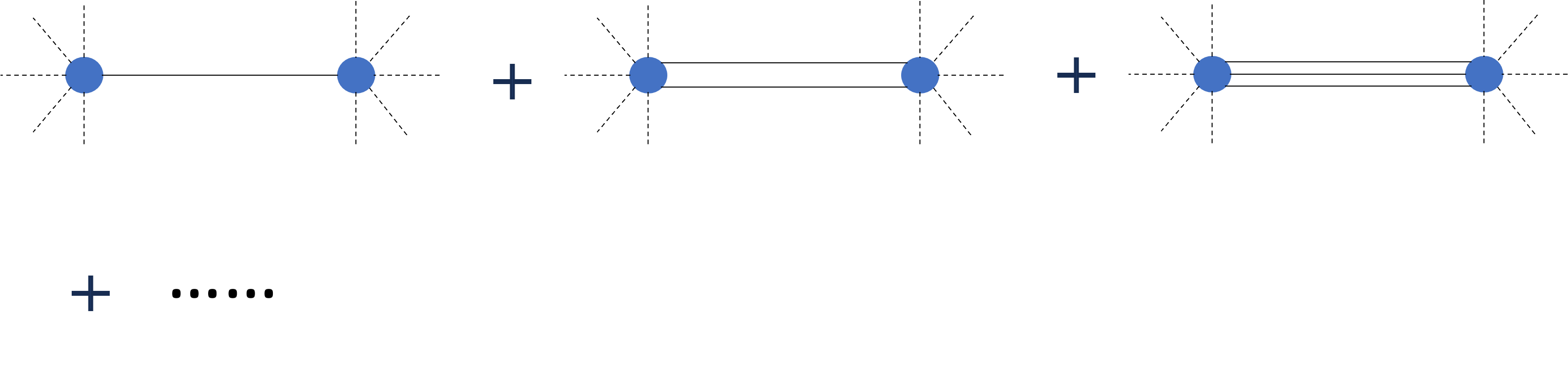}
\caption{The spin-network stack.}
\label{sn_stack}
\end{figure}

The spin-network state on the root graph $\G$ can be interpreted as the quantum geometry of a cellular decomposition of a spatial slice $\Sig$, and each intertwiner $I_\fn$ quantizes the geometry of a polyhedron with $M_{\fn}$ \emph{flat} faces, where $M_{\fn}$ is the valence of $\fn$ in the root graph $\G$ \cite{shape}. The spin-network on a generic graph in the stack corresponds to the same cellular decomposition, and the intertwiner $I_\fn$ still quantizes the polyhedron geometry with $M_{\fn}$ faces, whereas the \emph{curved} faces are allowed. Indeed, the links multiplicity $p_\fl>1$ associated to $\fl$ indicates that the polyhedron's face is discretized into $p_\fl$ small faces, while the 3d normals of the small faces are generally not parallel \cite{Han:2019emp}. The spin-network stack describes a quantum superposition of geometries with arbitrarily discretized curved faces.

\section{Spinfoam stack}\label{Spinfoam stack}

Given that spin-network stacks are well-defined states in the LQG Hilbert space and have interesting interpretations, the covariant dynamics of spin-network stacks must be taken into account in spinfoam theory.

A spinfoam is a covariant history of spin-network. Given a spin-network in 3d, the links $\fl$ and nodes $\fn$ evolve and become the faces $f$ and edges $e$ in (3+1) dimensions. A spinfoam is given by these faces and edges assigned respectively spins $j_f = k_f/2$ and intertwiners $I_e$ the same as the ones on links and nodes of the initial spin-network. Conversely, given a spinfoam, any spatial cross-section gives a spin-network. 

The faces and edges form a 2-complex underlying the spinfoam. The spinfoam amplitude, which defines a wave function of boundary spins and intertwiners, is defined on this chosen 2-complex. Consequently, most current investigations of spinfoams rely on a fixed 2-complex, leading to results that depend on this choice. However, a complete spinfoam formulation should yield predictions that are independent of the choice of 2-complexes. To achieve this, it is proposed that the amplitude should be summed over all possible 2-complexes. This approach is also motivated by the dynamics of LQG, as describing the evolution of generic LQG states--which are superpositions over different graphs--requires spinfoam amplitudes that are themselves a sum over various complexes.

Summing spinfoams over complexes motivates us to extend the concept of a stack to the spacetime picture. A spinfoam stack is a sum of spinfoams over a family of 2-complexes with the defining property that its intersection with any spatial slice yields a spin-network stack. The amplitude for a spinfoam stack, which we term the \emph{stack amplitude}, is the sum of the spinfoam amplitudes over all 2-complexes in the family.

The spinfoam faces and edges evolves respectively from spin-network links and nodes, so just as a spin-network stack is built by ``stacking'' links upon a root graph $\G$, a spinfoam stack is built by ``stacking'' faces upon a foundational root 2-complex, denoted $\ck$ (see FIG. \ref{sf_stack}). For any 2-complex, the face multiplicity of a closed loop in the 1-skeleton is the number of faces bounded by the loop. Any 2-complex is qualified to be a root complex if all face multiplicities are equal to one. Given a root complex $\ck$, a family $F(\ck)$ of 2-complexes can be generated from the root complex by arbitrarily increasing the face multiplicities. Given any root complex $\ck$, the stack amplitude $\sa_{{\cal K}}$ depending on $\ck$ is a sum of spinfoam amplitudes over the complexes in the family $F(\ck)$. The complete spinfoam amplitude is a sum of stack amplitudes $\sa_{{\cal K}}$ over root complexes sharing the same boundary.

The spinfoams are generally built on non-simplicial 2-complexes, so their amplitudes are constructed using the KKL formalism \cite{KKL,generalize}. Furthermore, in order to organize the sum over complexes, we define the coupling constant: $\l_f>0$ associated to each root face $f$. In the stack amplitude $\sa_{{\cal K}}$, the spinfoam amplitude on each complex in $F(\ck)$ is weighted by $\prod_{f\subset \ck}\l_f^{p_f}$, where $p_f$ is the face multiplicity at the root face $f$. The complete amplitude denoted by $\sa$ sums the stack amplitudes $\sa_{{\cal K}}$, each of which may be weighted by a coefficient $c_\ck\in\C$. In summary,
\be
\sa=\sum_{\ck}c_\ck\sa_\ck,\qquad \sa_\ck=\sum_{\{p_f\in\Z_+\}_{f\subset\ck}}\prod_{f\subset \ck}\l_f^{p_f}\sa \lt(\ck,\{p_f\}_{f\subset\ck}\rt),\label{sa6}
\ee
where $\sa \lt(\ck,\{p_f\}_{f\subset\ck}\rt)$ is the generalized EPRL spinfoam amplitude on the 2-complex in the family $F(\ck)$ with the face multiplicity $p_f$ at each root face $f$. 


The sum over complexes in the stack amplitude $\sa_\ck$ is compatible with the inner product on $\ch_{\rm Kin}$, so that $\sa_\ck$ is invariant under cut and gluing, in particular, $\sa_\ck$ can be expressed as gluing vertex amplitudes, as we will see in Section \ref{Stacking spinfoams}.




For any root face $f$, its dual face is endowed with the area $\a_{p_f,\vec{k}}$ in $\sa \lt(\ck,\{p_f\}_{f\subset\ck}\rt)$, where $\vec{k}=(k_1,\cdots,k_{p_f})$. We impose a constraint on the maximal area: $\a_{p_f,\vec{k}}\leq A_f$, similar to the spin-network stack. As a result, the sum $\sum_{p_f}$ and the sum over spins $\sa \lt(\ck,\{p_f\}_{f\subset\ck}\rt)$ are regularized to finite sums by the cut-off $A_f$. 

In the sum over root complexes in \eqref{sa6}, it turns out that our main result will only be sensitive to a finite number of degrees of freedom in the choice of coefficients $c_\ck$. It is because $\sa_\ck$ becomes independent of the bulk structure of $\ck$ in the limit of large cut-offs, as far as $\ck$ has a trivial fundamental group.

\begin{figure}[t]
\centering
\includegraphics[width=0.6\textwidth]{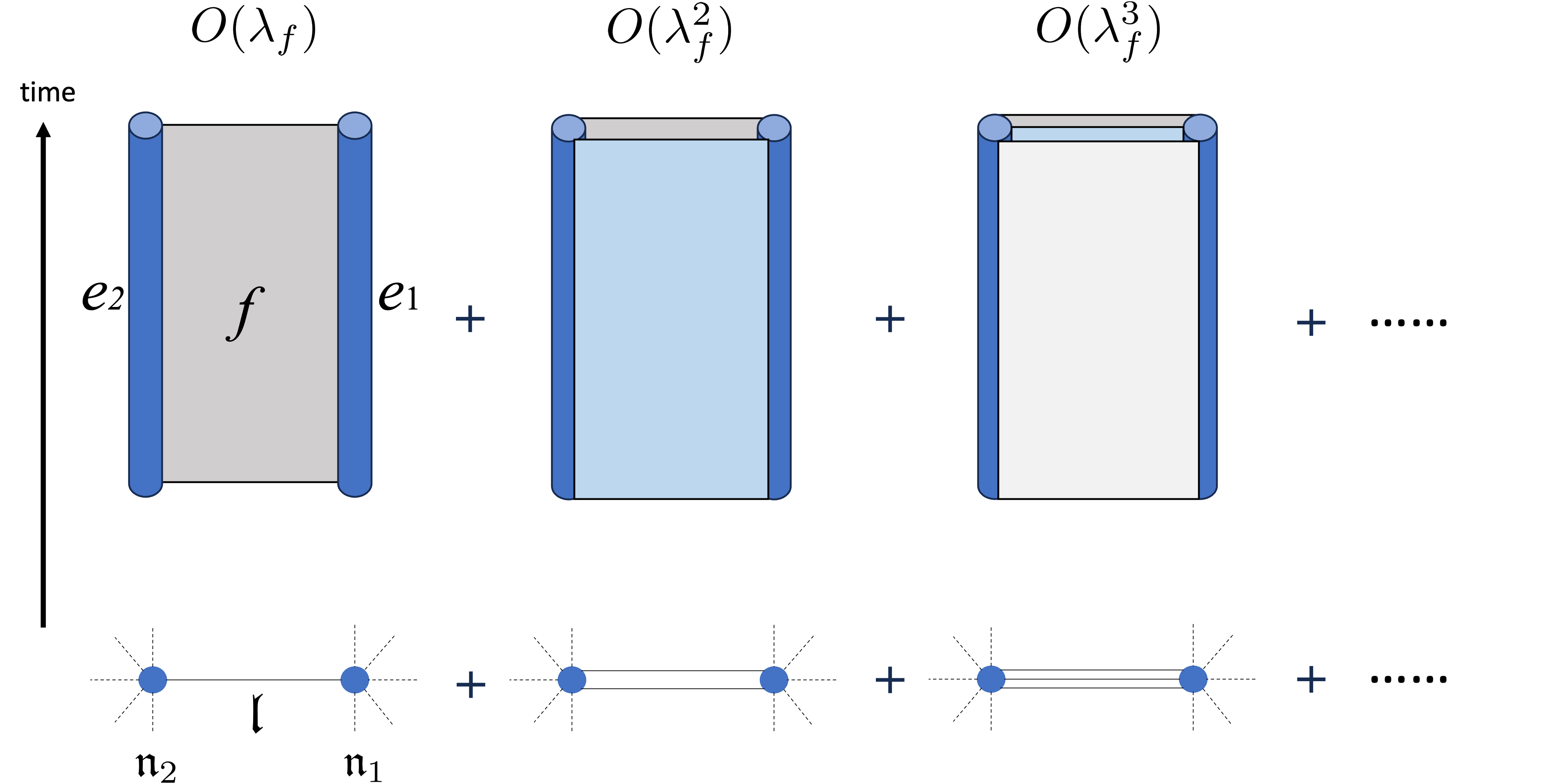}
\caption{The spin-network stack evolves to the spinfoam stack: The spin-network link $\fl$ evolves to the spinfoam face $f$. the spin-network nodes $\fn_1,\fn_2$ evolve to the spinfoam edges $e_1,e_2$. The faces evolves from the dashed links are not shown on this figure. The leftmost complex is the root complex. The power of coupling constant $\l_f$ counts the number of stacked faces.}
\label{sf_stack}
\end{figure}

\subsection{Stack vertex amplitude}\label{Stack vertex amplitude}

Consider a root complex that contains a single vertex $v$ connecting to a number $n_v$ of edges $e$. We generally assume $n_v\geq 5$, and the root complex is dual to a 4-simplex when $n_v=5$. The root complex contains a number of faces $f$, and each $f$ is bounded by a loop consisting of a pair of edges $e, e'$ and a boundary link $\fl_{f;e,e'}$. The face multiplicity equals one in the root complex. Following the discussion above \eqref{sa6}, the stack amplitude associated to the vertex $v$ is given by the following function of SU(2) holonomies $\vec{H}=\{H_{ef}^{(i)}\}$
\be
\sa_v\lt(\vec{A},\vec{H},\vec{\l}\rt)&=&\int  \prod_{e\text{ at }v}\rmd g_{ve}\, \delta(g_{ve_0})\prod_{f\text{ at }v}\o_f\lt(A_f;\{g_{ve}\},\vec{H},\l_{vf}\rt),\label{Av1}\\
\o_f&=&\sum_{p_f=1}^{\infty}\sum_{k_1,\cdots, k_{p_f}\in\Z_+}\prod_{i=1}^{p_f} \zeta^{(f)}_{k_i}\lt(\{g_{ve}\},\vec{H},\l_{vf}\rt)\Theta\left(A_{f}-\alpha_{p_f,\vec{k}}\right),\\
\zeta^{(f)}_{k_i}&=&\l_{vf} d_{k_i}\tr_{(k_i,\rho_i)}\lt[\lt(P_{k_i}g_{ve}^{-1}g_{ve'}P_{k_i}\rt) H_{e' f}^{(i)}H_{ef}^{(i)-1}\rt],\qquad \rho_i=\gamma (k_i+2),\label{Avstackampliude}
\ee
where $\l_{vf}$ and $A_f$ are the coupling constant and cut-off associated to the stacked faces at $f$. The amplitude is given by the Haar integrals $\int\rmd g_{ve}$ over $\Slc$ variables $g_{ve}$. Choosing arbitrarily an edges $e_0$, the delta function $\delta(g_{ve_0})$ fixes the noncompact gauge symmetry $g_{ve} \to x_v g_{ve}$, $x_v\in\Slc$ to make $\sa_v$ finite \cite{finite,Kaminski:2010qb}. The trace $\tr_{(k,\rho)}$ is over the infinite-dimensional Hilbert space $\ch_{(k,\rho)}$ that carries the principal-series unitary irreducible representation of $\Slc$ labelled by $k\in\Z_+$ and $\rho=\gamma (k+2)$, where $\g>0$ is the Barbero-Immirzi parameter. A canonical SU(2) subgroup has been chosen in $\Slc$ by the time gauge. The $\Slc$ representation can be decomposed into a direct sum of SU(2) irreducible representations: $\ch_{(k,\rho)}\cong \oplus_{k'=k}^\infty \ch_{k'}$. This allows us to define an orthonormal basis $|(k,\rho),k',m\rangle$ (for $m=-k/2,\cdots,k/2$), called the canonical basis of $\ch_{(k,\rho)}$. The projection operator $P_k$ associated to $\ch_k$ subspace \cite{hanPI,EPRL} is given by:
\be
P_k=\sum_{m=-k/2}^{k/2}\big| (k,\rho),k,m\big\rangle\big\langle (k,\rho),k,m\big|\ ,
\ee
where the state $|(k,\rho),k,m\rangle$ forms a basis in $\ch_k\subset\ch_{(k,\rho)}$.

In the formula \eqref{Avstackampliude}, the stack vertex amplitude $\sa_v$ is a function of half-link SU(2) holonomies $H_{ef}^{(i)}$ on the boundary, $i=1,\cdots,p_f$. The composition $H_{e' f}^{(i)}H_{ef}^{(i)-1}$ is the holonomy along the $i$-th link connecting the end points of $e,e'$. The reason why we split each link into halves is becoming clear in a moment.

Due to the sum over the face multiplicity $p_f$, the amplitude $\sa_v$ sums the spinfoam amplitudes on a series of different complexes. These complexes share the single vertex $v$ and edges $e$ but have different face multiplicities. The key point is that the spinfoam amplitude factorizes into contributions at individual faces under the $\Slc$ integrals, so the sum over $p_f$'s can be carried out independently at each root face $f$.

The neighborhood $U_v$ of the vertex $v$ is a 4-ball. We make the partition of the boundary $\partial U_v=S^3$ into polyhedra $R_e$ for each $e$ connecting at $v$, such that $\partial U_v=\cup_e R_e$. The intersection $R_e\cap R_{e'} $ is at their boundaries if $e,e'$ shares a root face, otherwise the intersection is empty. Each polyhedron $R_e$ encloses the node at $e\cup \partial U_v$. Each interface $\Fs_{f}=R_e\cap R_{e'} $ is dual to the stacked faces at $f$. The amplitude $\sa_v$ as a function of holonomies $H^{(i)}_{ef}$ is a spin-network stack on the boundary $\partial U_v$.

Intuitively, since the model is Lorentzian and all the regions in $\partial U_v\cong S^3$ are endowed with the spacelike quantum geometry (as SU(2) spin-networks), a subset of the polyhedra are in the causal future of the rest \footnote{Rigorously speaking, the subset of polyhedra in the causal future depends on the boundary state. Before taking the inner product between $\sa_v$ and boundary states, $\sa_v$ encodes all possible division of $\partial U_v$ into future and past.}. We denote by $B_+$ the union of these polyhedra and denote by $B_-$ the union of the polyhedra in the causal past. The interface between $B_+$ and $B_-$ corresponds to the corner of a causal diamond, see FIG.\ref{diamond} for an illustration. This vertex amplitude describes the local dynamics of spin-network stacks inside a causal diamond.

Note that in the formula \eqref{Av1}, the sum over $k_1,\cdots,k_p\in\Z_+$ is not constrained by the triangle inequality, because the triangle inequality of spins is imposed by the integrals over $g_{ve}$.

\begin{figure}[t]
\centering
\includegraphics[width=0.3\textwidth]{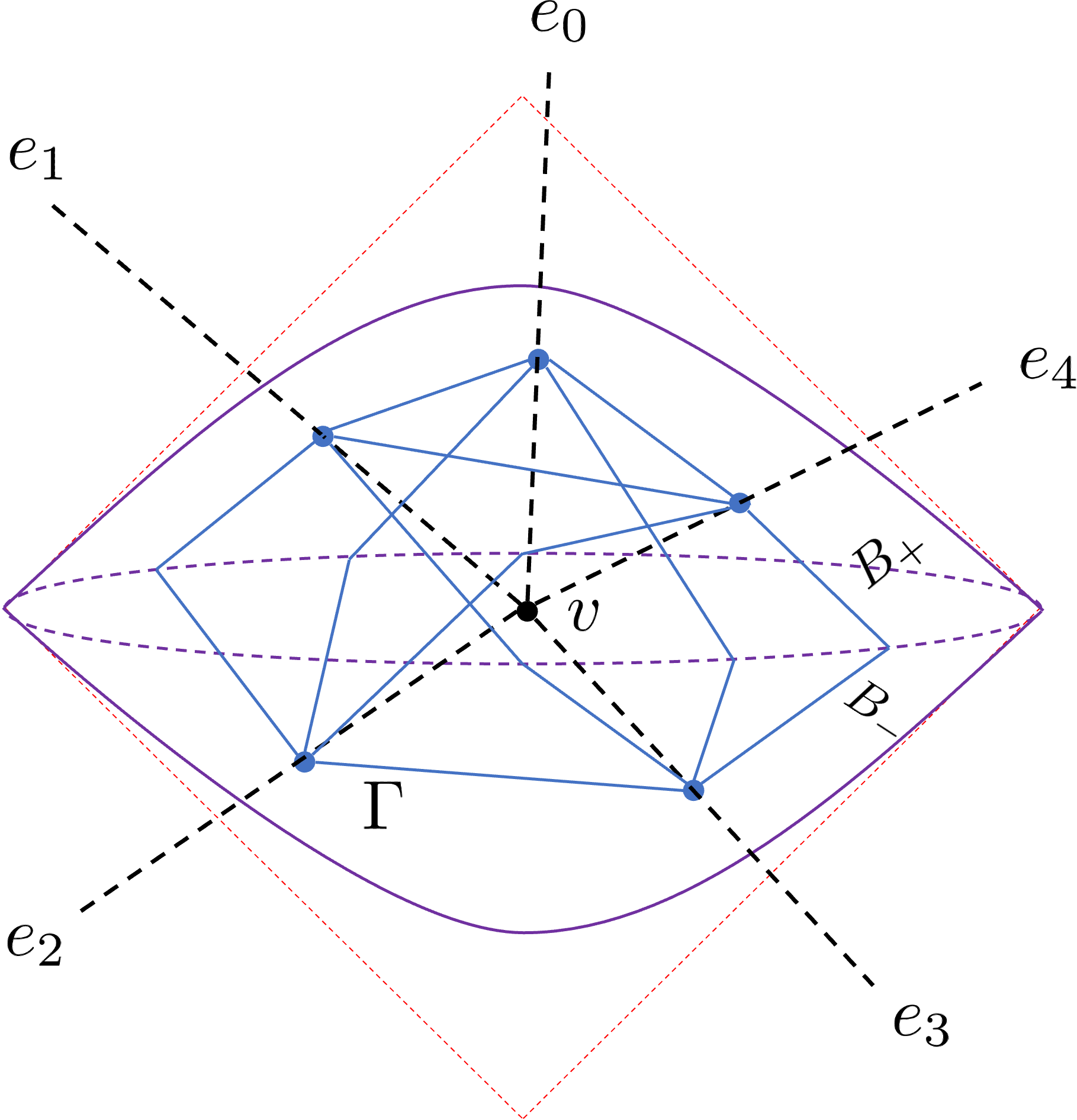}
\caption{The intuitive picture of the vertex amplitude $\sa_v$ for a valent-5 vertex $v$: $B_+\cup B_-=\partial B_4$ (drawn in purple lines) is the 3d boundary of a neighborhood of the vertex. $B_+$ is in the causal future of $B_-$. $B_+\cap B_-$ (the purple dashed circle) is the corner of the causal diamond (the red dashed diamond). The edges and faces of the spinfoam leaves the nodes and links (in blue) in $\partial B_4$. Here only the root graph is drawn, so each blue link corresponds to arbitrarily number of links for $\sa_v$. Some of the links are across the conner. Each nodes corresponds to a closed subregion such that $B_\pm$ are the unions of subregions.}
\label{diamond}
\end{figure}

\subsection{Stack amplitudes on arbitrary complex}\label{Stacking spinfoams}

The stack amplitude $\sa_\ck$ on any root 2-complex is given by the product of $\sa_v$ over all vertices followed by integrating over the boundary data $H^{(i)}_{ef}$ to glue the vertex amplitudes
\be
\sa_\ck\left(\vec{A},\vec{H},\vec{\l}\right)=\int [\rmd H] \prod_v\sa_v\lt(\vec{A},\vec{H},\vec{\eps}\rt).\label{innerprodKin}
\ee
The integral $\int [\rmd H]$ is the product Haar integration over all internal $H^{(i)}_{ef}$ for gluing the vertex amplitudes. The dimension of the integral is finite due to the cut-off $A_f$ that is identified between the vertex amplitudes sharing the same $f$. When a pair of vertex amplitudes $\sa_v,\sa_{v'}$ are glued, the integral picks up the corresponding terms with the same $p_f$ in $\sa_v$ and $\sa_{v'}$, since $\int dH\, D^k(H)=0$ for $k\neq 0$. The nontrivial integrals have the following pattern
\be
\int \rmd H_1\,d_{k'}\tr_{k'}({H}^{-1}_0 A H_1)\,d_k\tr_k(H_1^{-1}B H_2 )=d_k\delta^{kk'}\tr_k({H}^{-1}_0 AB {H}_2 ),
\ee
where $A,B$ stand for $P_{k_i}g_{ve}^{-1}g_{ve'}P_{k_i}$ at different vertices. This integral illustrated graphically in FIG.\ref{glueface} glues the faces in the vertex amplitudes.

\begin{figure}[h]
\centering
\includegraphics[width=0.8\textwidth]{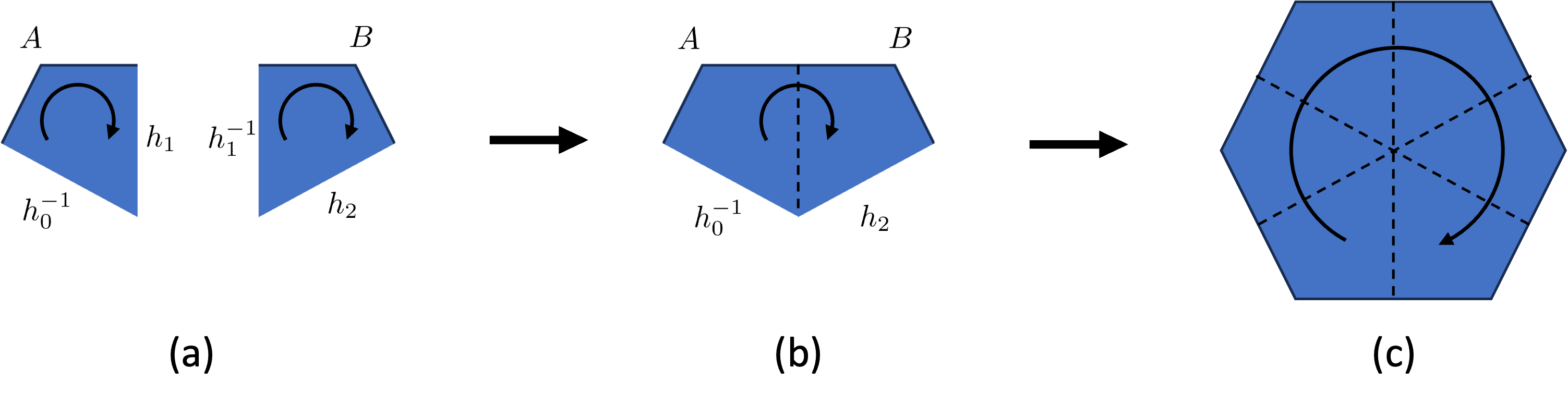}
\caption{(a)$\to$(b): Gluing a pair of faces in two different vertex amplitudes. (c) Generalization to gluing the faces in six vertex amplitudes to form an internal face.}
\label{glueface}
\end{figure}

The coupling constant $\l_f$ at the face $f$ in the root complex $\ck$ is a product of $\l_{vf}$ of the vertex amplitudes:
\be
\l_f=\prod_{v\in\partial f} \l_{vf}
\ee
If we let $\l_{vf}$ to be constant among $v\in\partial f$: $\l_{vf}\equiv e^{\phi_f}$, it enters the stack amplitude by
\be
\l_f^{p_f}=e^{\Phi_f p_f},\qquad \Phi_f=|V_f|\phi_f
\ee
where $|V_f|$ denotes the number of vertices on the boundary of $f$. $\Phi_f$ coupling to the multiplicity $p_f$ may be interpreted as a chemical potential for the stacked faces.

The stack amplitude $\sa_\ck$ is expressed by the following integral formulation:
\be
\sa_{{\cal K}}\left(\vec{A},\vec{\ff},\vec{\l}\right)&=&\int\prod_{(v,e)}\rmd g_{ve}\prod_{f=h,b}\o_f\left(A_f;\{g_{ve}\},\vec{\ff},\l_f\right)\label{saKampli}\\
\o_f&=&\sum_{p_f=1}^{\infty}\sum_{k_1,\cdots,k_{p_f}\in\Z_+}\prod_{i=1}^{p_f}\zeta^{(f)}_{k_{i}}\left(\{g_{ve}\},\ff^{(i)},\l_f\right)\Theta\left(A_{f}-\alpha_{p_f,\vec{k}}\right).\label{omegaf20}
\ee
We adopt a consistent notation where $v, e,$ and $f$ represent the vertices, edges, and faces of the root complex $\ck$. We distinguish internal faces, labelled as $h$, from boundary faces, labelled as $b$. The internal face does not connect to the boundary $\partial \ck$. All $p_f$ faces stacked at $f$ share the same boundary edges, so they all depend on the same set of group variables $g_{ve}\in\Slc$. This evolves from the fact that stacked links in a spin-network stack connect to the same pair of nodes. The stacked faces carry spins $k_i/2$, $i=1,\cdots,p_f$.

The sum over all 2-complexes in the spinfoam stack is encoded in Eq. \eqref{saKampli} through the summations over $p_f$ for each face $f$ in the root complex. Every complex in this sum is generated by stacking faces onto $\ck$. The sum over spins $k_i/2$ includes all faces, even boundary ones, compatible to the boundary spin-network stack states. To ensure the sums over both $p_f$ and their spins $k_i$ are finite, a cutoff $A_f$ is applied to each face $f \subset \ck$. In the expression \eqref{saKampli}, we have taken into account the boundary state labelled by $\vec{\ff}$. The functions $\o_b$ and $\zeta^{(b)}_{k_{i}}$ for boundary faces $f=b$ depend on the boundary state $\ff^{(i)}=\ff_2^{(i)}\otimes \ff_1^{(i)}\in \ch_{k_i}\otimes\ch_{k_i}^*$, where $i$ labels the stacked faces. In contrast, the functions $\o_h$ and $\zeta^{(h)}_{k_{i}}$ for internal faces $f=h$ are independent of $\vec\ff$. The explicit definition of $\zeta^{(f)}_k$ is provided below.

\begin{itemize}
\item Internal face $f=h$:
\be
\zeta_{k}^{(h)}=\l_h \tau_k^{(h)},\qquad \tau_k^{(h)}= d_k\tr_{(k,\rho)}\lt[\overrightarrow{\prod_{v\in\partial h}}P_kg_{ve}^{-1}g_{ve'}P_k\rt],\qquad \rho=\g (k+2)\label{zetakh}
\ee
where $d_k=k+1$ is the dimension of the spin-$k/2$ representation of SU(2). At any vertex $v$, the edges $e$ and $e'$ are, respectively, the incoming and outgoing edges according to the face's orientation.  

\item Boundary face $f=b$:
\be
\zeta_k^{(b)}=\l_b\tau_k^{(b)},\qquad \tau_k^{(b)}= d_k\lag \ff_1\lt|\overrightarrow{\prod_{v\in\partial b}}P_kg_{ve}^{-1}g_{ve'}P_k\rt|  \ff_2\rag. \label{zetakb}
\ee
This is defined for normalized boundary states $\ff_2\otimes \ff_1\in\ch_k\otimes\ch_k^*$, with the inner product taken in the Hilbert space $\ch_{(k,\rho)}$. For example, if coherent intertwiners are used for the boundary, $\ff_1$ and $\ff_2$ would be coherent states such as $\langle\ff^{(i)}_{1}|=\langle k_i,\xi^{(i)}_{e_{1} b}|$ and $|\ff^{(i)}_{2}\rangle=|k_i,\xi^{(i)}_{e_{2} b}\rangle$. The index $i=1,\cdots,p_b$ corresponds to the stacked links at the boundary, and $e_1, e_2$ are the edges that connect to the end points of the links. For the rest of this paper, we will leave $\vec \ff$ as arbitrary.

\end{itemize}


The integrand of $\sa_\ck$ is invariant under a set of continuous gauge transformations:
\be
g_{ve}\to x_v g_{ve} u_e,\qquad x_v\in\Slc,\quad u_e\in\Su.
\ee
The $\Slc$ gauge freedom leads to a divergence. As in the vertex amplitude, we address this by fixing the gauge, choosing a specific edge $e_0$ at each vertex $v$ and setting its corresponding variable to the identity, $g_{ve_0}=1$. 

Note that the stacked faces at a given internal face $h$ share the same boundary and thus creates some closed surfaces, or namely bubbles. The triangle inequality cannot give upper bound to spins on any bubble. The cut-off $A_h$ is the regulartor for the bubble divergence.

\section{Localization to the space of flat connections}\label{Localization to the space of flat connections}

Let us focus on $\o_h$ of an internal root face $f$ in the integrand of \eqref{saKampli}. The method of Laplace transform \cite{lqgee1,Agullo:2010zz,BarberoG:2008dwr} can be applied to $\o_h$ for computing the sum over the states associated to the stacked faces at $h$: Given two sequence $\{\a_n\}_{n=1}^\infty $ and $\{\b_n\}_{n=1} ^\infty $ where $\b_n\in\C$ and $\a_n>0$, if $\sum_{n=1}^\infty\left|\beta_{n}\right|e^{-\alpha_{n}\re(s)}<\infty$ for some $\re(s)>0$, we have
\be
\sum_{n,\alpha_{n}< A}\beta_{n}=\sum_{n=1}^{\infty}\beta_{n}\Theta\left(A-\alpha_{n}\right) =\frac{1}{2\pi i}\int_{T-i\infty}^{T+i\infty}\frac{\rmd s}{s}\left[\sum_{n=1}^{\infty}\beta_{n}e^{-\alpha_{n}s}\right]e^{As}.\label{invLaplace}
\ee
for the cut-off $A$ that does not coincide with any $\a_n$. The parameter $T>0$ is greater than the real part of all singularities given by the integrand.

We apply the formula \eqref{invLaplace} to the state-sum in $\o_h$ for internal face $h$: The sum over $n$ in \eqref{invLaplace} corresponds to the sum over $p_h$ and $k_1,\cdots,k_{p_h}$, and $\a_n$ corresponds to $\a_{p_h,\vec{k}}$. The summand $\b_n$ corresponds to $\prod_{i=1}^{p_h} \zeta_{k_i}^{(h)}(g_{ve},\l_h)$. A useful fact is that $\zeta_{k_i}^{(h)}(g_{ve},\l_h)$ depends on $i$ only through $k_i$. Applying \eqref{invLaplace} to $\o_h$ gives
\be
\o_h=\frac{1}{2\pi i}\int\limits_{T-i\infty}^{T+i\infty}\frac{\rmd s_h}{s_h}e^{A_h s_h}\sum_{p_h=1}^{\infty}\lt[\sum_{k=1}^\infty\zeta_k^{(h)} e^{-s_h\sqrt{k(k+2)}}\rt]^{p_h}=\frac{1}{2\pi i}\int\limits_{T-i\infty}^{T+i\infty}\frac{\rmd s_h}{s_h}e^{A_h s_h}\frac{\sum_{k=1}^\infty \zeta^{(h)}_k e^{- s_h\sqrt{k(k+2)}}}{1-\sum_{k=1}^\infty\zeta^{(h)}_k e^{- s_h\sqrt{k(k+2)}}}\ .\label{shintegral}
\ee
For a large cut-off $A_h$, the integral is dominated by the pole (in the $s_h$-plane) of the integrand with the largest $\re(s_h)$. The pole with $\re(s_h)>0$ can only be obtained by 
\be
\sum_{k=1}^\infty\zeta^{(h)}_k(g_h,\l_h)\, e^{- s_h(g_h,\l_h)\sqrt{k(k+2)}}=1.\label{poleeqn}
\ee
The solution $s_h(g_h,\l_h)$ depends on the group variables $g_h=\{g_{ve}\}_{e\subset\partial h}$ and the coupling constant $\l_h$. In order to obtain the maximum of $\re(s_h)$ among solutions, we use the following bound:

\begin{lemma}\label{zetabound}
$|\zeta_k^{(h)}|\leq \l_h d_k^2$, the equality holds if and only if $g_h$ satisfies
\be
g_{ve}^{-1}g_{ve'}\in \Su,\quad \forall e,e'\subset\partial h, \qquad \overrightarrow{\prod_{v\in\partial h}} g_{ve}^{-1}g_{ve'} =\pm1\ .\label{ghpm1}
\ee

\end{lemma}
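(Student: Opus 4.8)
The plan is to reduce the claim to an operator-norm estimate on a finite-dimensional space and then study when that estimate is saturated. Write $\tau_k^{(h)}=d_k\,\tr_{(k,\rho)}(M)$ with $M=\overrightarrow{\prod_{v\in\partial h}}P_k\,g_{ve}^{-1}g_{ve'}\,P_k$. Since the outermost factors are projectors, $M=P_kMP_k$, so the trace over $\ch_{(k,\rho)}$ collapses to a trace over the $d_k$-dimensional subspace $\ch_k$; set $N:=M|_{\ch_k}$. Each factor $T_v:=P_k\,g_{ve}^{-1}g_{ve'}\,P_k$ is a contraction on $\ch_k$: because $(k,\rho)$ is a principal-series \emph{unitary} representation, $g_{ve}^{-1}g_{ve'}$ is unitary on $\ch_{(k,\rho)}$, and sandwiching a unitary by the projector $P_k$ gives $\|T_v\|_{\mathrm{op}}\le 1$. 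Hence $\|N\|_{\mathrm{op}}\le 1$, and the elementary trace inequality $|\tr N|\le\|N\|_1\le d_k\|N\|_{\mathrm{op}}\le d_k$ yields $|\tau_k^{(h)}|\le d_k^2$, i.e.\ $|\zeta_k^{(h)}|=\l_h|\tau_k^{(h)}|\le\l_h d_k^2$.

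For the equality case I would track when each inequality is saturated. The chain $|\tr N|=\|N\|_1=d_k$ with all singular values bounded by one forces every singular value of $N$ to equal one, so $N$ is unitary on $\ch_k$; and $|\tr N|=d_k$ for a unitary on a $d_k$-dimensional space forces $N=e^{i\theta}\,\mathrm{id}_{\ch_k}$ for some phase $\theta$. Next, a product of contractions that is unitary must have unitary factors: from $\|T_{v_1}\cdots T_{v_n}x\|=\|x\|$ with all $\|T_v\|_{\mathrm{op}}\le1$ one gets, reading the loop inward, that each $T_v$ is isometric on the relevant (full, by finite dimension) subspace, hence unitary on $\ch_k$. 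Thus each $g_{ve}^{-1}g_{ve'}$ maps the finite-dimensional subspace $\ch_k$ isometrically onto itself, i.e.\ stabilizes $\ch_k$.

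The crux is to identify the stabilizer $S:=\{g\in\Slc:g\,\ch_k=\ch_k\}$ with the canonical $\Su$. One inclusion is immediate, since $\Su$ acts irreducibly on $\ch_k$ and preserves it. For the reverse I would pass to the Lie algebra, $\mathrm{Lie}(S)=\{X\in\mathfrak{sl}(2,\mathbb{C}):X\,\ch_k\subseteq\ch_k\}$: the rotation generators $J_i$ preserve $\ch_k$, whereas the boost generators $K_i$ form a spin-one vector operator whose reduced matrix element between the minimal level $\ch_k$ and the adjacent SU(2) level is nonzero in the principal series, so $K_i\,\ch_k\not\subseteq\ch_k$. Hence $\mathrm{Lie}(S)=\mathfrak{su}(2)$ and $S^0=\Su$, and a short normalizer argument ($\Su$ is normal in $S$, while $N_{\Slc}(\Su)=\Su$) upgrades this to $S=\Su$. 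This nonvanishing-reduced-matrix-element input, stating precisely that a nontrivial boost cannot leave invariant the lowest SU(2) level of a principal-series module, is the \emph{main obstacle} and the only genuinely representation-theoretic ingredient; it can be read off from the explicit canonical-basis matrix elements of $e^{\eta K_3}$ used in the EPRL construction.

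Finally, with every $g_{ve}^{-1}g_{ve'}\in\Su$, each $T_v|_{\ch_k}$ equals the spin-$k/2$ Wigner matrix $D^k(g_{ve}^{-1}g_{ve'})$, so $N=D^k(G)$ with $G:=\overrightarrow{\prod_{v\in\partial h}}g_{ve}^{-1}g_{ve'}\in\Su$. The requirement $N=e^{i\theta}\,\mathrm{id}_{\ch_k}$ means $G$ acts as a scalar in the irreducible representation $D^k$, which by Schur's lemma forces $G$ into the center of $\Su$, i.e.\ $G=\pm1$; conversely $G=\pm1$ gives $N=(\pm1)^k\,\mathrm{id}_{\ch_k}$ and hence $|\tau_k^{(h)}|=d_k^2$. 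This reproduces exactly the two stated conditions \eqref{ghpm1}.
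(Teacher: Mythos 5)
Your proposal is correct, but note that the paper itself contains no argument to compare against: its ``proof'' of Lemma \ref{zetabound} is a bare citation to \cite{spinfoamstack}, so your derivation can only be judged on its own merits, and on those merits it holds up. The chain contraction bound $\Rightarrow$ $|\tr N|\le \|N\|_1\le d_k\|N\|_{\mathrm{op}}\le d_k$ $\Rightarrow$ saturation forces $N=e^{i\theta}\,\mathrm{id}_{\ch_k}$ $\Rightarrow$ a unitary product of contractions has unitary factors $\Rightarrow$ unitarity of $P_k g_{ve}^{-1}g_{ve'}P_k$ on $\ch_k$ forces $g_{ve}^{-1}g_{ve'}$ to map $\ch_k$ onto itself $\Rightarrow$ the stabilizer of $\ch_k$ is $\Su$ $\Rightarrow$ Schur-type reasoning on the irreducible $D^k$ (with $d_k\ge 2$ since $k\ge 1$) gives $G=\pm 1$, together with the easy converse, is a complete proof of both the bound and the equality characterization. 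One genuine improvement is available at the step you flag as the main obstacle: the nonvanishing boost matrix element is not actually needed. If some nonzero boost $X$ preserved $\ch_k$, then, since the rotations preserve $\ch_k$ and the boosts form a spin-one vector operator under them, the real Lie algebra generated by $\mathfrak{su}(2)$ and $X$ is all of $\mathfrak{sl}_2\mathbb{C}$; because $\ch_k$ consists of $K$-finite, hence analytic, vectors (the same Harish-Chandra fact the paper invokes in Appendix \ref{Analyticity of shg}), the finite-dimensional subspace $\ch_k$ would then be invariant under the connected group $\Slc$, contradicting irreducibility of the infinite-dimensional principal series $\ch_{(k,\rho)}$. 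This replaces the only computational, model-specific input by a purely structural one. Two small points of rigor you should still supply: (i) the stabilizer $S$ is a closed subgroup (by strong continuity of the representation), which is what licenses the passage to its Lie algebra and the identification $S^0=\Su$ before your normalizer argument; (ii) ``$D^k(G)$ scalar implies $G$ central'' is not literally Schur's lemma (which gives the converse) and should be phrased via the normal subgroups of $\Su$ or the weight spectrum of $D^k$, though the conclusion $G=\pm 1$ is of course right.
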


\begin{proof}
See \cite{spinfoamstack}.
\end{proof}

Combining this bound and the equation \eqref{poleeqn} for the pole, we obtain the following result:

\begin{theorem}
Given any $\l_h>0$, $\re(s_h(g_h,\l_h))$ reaches the maximum if and only if $g_h$ satisfies
\be
g_{ve}^{-1}g_{ve'}\in \Su,\quad \forall e,e'\subset\partial h, \qquad \overrightarrow{\prod_{v\in\partial h}} g_{ve}^{-1}g_{ve'} =1\ .\label{g0solution} 
\ee

\end{theorem}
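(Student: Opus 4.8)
The plan is to recast the statement as a sharp estimate on the dominant pole. I would write the pole condition \eqref{poleeqn} as $f_h(s)=1$ with $f_h(s)=\sum_{k\ge1}\zeta_k^{(h)}e^{-s\sqrt{k(k+2)}}$, and introduce the auxiliary real function $F(\sigma)=\l_h\sum_{k\ge1}d_k^2\,e^{-\sigma\sqrt{k(k+2)}}$, which on its domain of convergence is continuous, strictly decreasing, and exhausts all positive values; let $\sigma^\ast>0$ be the unique root of $F(\sigma^\ast)=1$. Crucially $\sigma^\ast$ does not depend on $g_h$. The strategy is then: first bound $\re(s_h)\le\sigma^\ast$ for every $g_h$, and afterwards characterize exactly when equality is reached.

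For the upper bound I would chain the triangle inequality with Lemma \ref{zetabound}: since $f_h(s_h)=1$,
\begin{equation}
1=|f_h(s_h)|\le\sum_{k\ge1}|\zeta_k^{(h)}|\,e^{-\re(s_h)\sqrt{k(k+2)}}\le F\big(\re(s_h)\big),
\end{equation}
and monotonicity of $F$ gives $\re(s_h)\le\sigma^\ast$. Equality $\re(s_h)=\sigma^\ast$ forces both inequalities to saturate: the second forces $|\zeta_k^{(h)}|=\l_h d_k^2$ for all $k$, hence by Lemma \ref{zetabound} the configuration must obey \eqref{ghpm1}; the triangle inequality forces all summands $\zeta_k^{(h)}e^{-s_h\sqrt{k(k+2)}}$ to share a common phase, and since they sum to the positive real $1$ that phase is $0$, i.e.\ each summand is a nonnegative real.

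Next I would split the two branches permitted by \eqref{ghpm1}. Using that $\Su$ elements commute with $P_k$ (because the decomposition $\ch_{(k,\rho)}\cong\oplus_{k'\ge k}\ch_{k'}$ is $\Su$-reducible), the ordered product in \eqref{zetakh} collapses to $P_k\big(\overrightarrow{\prod_{v\in\partial h}}g_{ve}^{-1}g_{ve'}\big)P_k=P_k(\pm1)P_k$, giving $\zeta_k^{(h)}=\l_h d_k^2$ in the $+1$ branch and $\zeta_k^{(h)}=\l_h(-1)^k d_k^2$ in the $-1$ branch (since $-\mathbb{1}\in\Su$ acts as $(-1)^k$ on the spin-$k/2$ space $\ch_k$). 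In the $+1$ branch all $\zeta_k^{(h)}>0$, so the real value $s_h=\sigma^\ast$ solves $f_h=1$ with every summand positive; the phase condition holds and the maximum $\re(s_h)=\sigma^\ast$ is attained, which is precisely \eqref{g0solution}.

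The crux is ruling out the $-1$ branch. There the phase condition requires $(-1)^k e^{-\I\,\im(s_h)\sqrt{k(k+2)}}>0$, i.e.\ $e^{-\I\,\im(s_h)\sqrt{k(k+2)}}=(-1)^k$ for all $k\ge1$. This is impossible: the numbers $\sqrt{k(k+2)}$ are pairwise incommensurate, and already the $k=1$ and $k=2$ conditions, $\im(s_h)\sqrt{3}\equiv\pi$ and $\im(s_h)\,2\sqrt{2}\equiv0\ (\mathrm{mod}\ 2\pi)$, admit no common solution because $\sqrt{3/2}$ is irrational. Hence in the $-1$ branch the triangle inequality is strict, forcing $\re(s_h)<\sigma^\ast$. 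Therefore the maximum of $\re(s_h)$ is attained exactly on the $+1$ branch, establishing \eqref{g0solution}. I expect this final number-theoretic step—that no imaginary shift of $s_h$ can realign the alternating phases—to be the only delicate point; the remainder is the monotonicity and triangle-inequality packaging built on the already-proven Lemma \ref{zetabound}.
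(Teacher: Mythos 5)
Your proof is correct and follows essentially the same route as the paper's: bound $\re(s_h)$ by the unique root $\beta_h$ (your $\sigma^\ast$) of $\lambda_h\sum_{k\ge 1} d_k^2 e^{-\beta_h\sqrt{k(k+2)}}=1$ using the triangle inequality together with Lemma \ref{zetabound}, conclude that saturation forces \eqref{ghpm1}, and then exclude the $-1$ branch. The one substantive difference lies in that exclusion step: the paper merely asserts that $\overrightarrow{\prod}_{v\in\partial h}\, g_{ve}^{-1}g_{ve'}=-1$ implies $\re(s_h)<\beta_h$ strictly and defers the argument to the appendix of \cite{spinfoamstack}, whereas you prove it on the spot: saturation of the triangle inequality would force every summand $\lambda_h(-1)^k d_k^2 e^{-s_h\sqrt{k(k+2)}}$ to be a positive real number, i.e. $e^{-\rmi\,\im(s_h)\sqrt{k(k+2)}}=(-1)^k$ for all $k$, and already the $k=1$ and $k=2$ conditions are incompatible because $(2m+1)\sqrt{2}=n\sqrt{3}$ has no integer solutions. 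This phase-alignment/irrationality argument is sound (each summand has nonzero modulus, so alignment with the positive real sum forces term-by-term positivity, and $n=0$ is excluded since $\im(s_h)=0$ contradicts the $k=1$ condition), and it makes your treatment self-contained precisely at the point where the paper relies on an external citation.
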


\begin{proof}
The upper bound of $\re(s_h)$ at the pole is derived from \eqref{poleeqn} by
\be
1\leq \sum_{k=1}^\infty\lt|\zeta^{(h)}_k(g_h,\l_h) \rt|e^{- \re\lt(s_h(g_h,\l_h)\rt)\sqrt{k(k+2)}}\leq \l_h\sum_{k=1}^\infty d_k^{2}e^{- \re(s_h(g_h,\l_h))\sqrt{k(k+2)}}.\label{bound28}
\ee
The right-hand side monotonically decreases as $\re(s)$ grows. Therefore, the solution $s_h(g_h,\l_h)$ of \eqref{poleeqn} satisfies $\re(s_h(g_h,\l_h))\leq \b_h(\l_h)$, where $\b_h(\l_h)>0$ satisfies 
\be
\l_h\sum_{k=1}^\infty d_k^{2}e^{- \b_h\sqrt{k(k+2)}}=1,\qquad 
\ee
When $\re(s_h(g_h,\l_h))$ reaches the maximum for some $g_h$: $\re(s_h(g_h,\l_h))=\b_h(\l_h)$, the inequality \eqref{bound28} implies 
\be
\sum_{k=1}^{\infty}|\zeta_{k}^{(h)}(g_h,\l_h)|e^{-\beta_{h}(\l_h)\sqrt{k(k+2)}}=1,
\ee
then the equality $|\zeta_{k}^{(h)}|=\lambda_{h}d_{k}^{2}$ in Lemma \ref{zetabound} must hold, and it restricts $g_h$ to satisfy \eqref{ghpm1}. In addition, the case with $\overrightarrow{\prod}_{v\in\partial h} g_{ve}^{-1}g_{ve'}= -1$ is ruled out, because in this case, Eq.\eqref{poleeqn} gives $\l_h\sum_{k=1}^\infty (-1)^k d_k^{2}e^{- s_h\sqrt{k(k+2)}}=1$,
which implies $\re(s_h)<\b_h$ strictly, see Appendix of \cite{spinfoamstack}.

Conversely, this restriction implies $\zeta_k^{(h)}=\l_h d_k^2$ and $s_h(g_h,\l_h)=\b_h(\l_h)$. 

\end{proof}


For large $A_h$, $\o_h$ is given by
\be
&&\o_h= e^{A_h s_h(g_h,\l_h)}\cf_h(g_h,\l_h) +\Fr_h ,\qquad \cf_h(g_h,\l_h)=\frac{1}{s_h(g_h,\l_h)\lag \sqrt{k(k+2)} \rag_{h,g}},\label{omegah31}\\
&&\lag \sqrt{k(k+2)} \rag_{h,g}=\l_h\sum_{k=1}^\infty \sqrt{k(k+2)} \t^{(h)}_k(g_h) e^{- s_h(g_h)\sqrt{k(k+2)}},
\ee
The maximum of $\re(s_h(g_h,\l_h))$ equals $\beta_h$. All other poles\footnote{Other possible poles includes $s_h=0$ and $s_h$ that solves \eqref{poleeqn} but cannot reach $\re(s_h)=\b_h$, although the solution of \eqref{poleeqn} is generally non-unique. } in \eqref{shintegral} have $\re(s_h)$ strictly less than $\b_h(\l_h)$. Their contributions collected by $\Fr_h$ are subleading in $\sa_\ck$ since $e^{A_h\re(s_h)}\ll e^{A_h\b_h}$ (see \cite{spinfoamstack} for some more discussion).

Applying \eqref{omegah31} to $\sa_\ck$ and neglecting $\Fr_h$, we obtain the following expression
\be
\sa_\ck&\simeq& e^{\sum_h \beta_h(\l_h) A_h}\int \prod_{(v,e)}\rmd g_{ve}\, e^{S(g,\l)}\,{\prod_h \cf_h(g_h,\l_h)}\prod_b\o_b\lt(A_b,\{g_{ve}\}_{e\subset\partial b},\vec\ff,\l_b\rt), \label{AKexponen0}\\
&&S(g,\l)=\sum_h A_h \lt[s_h\lt(g_h,\l_h\rt)-\b_h\lt(\l_h\rt)\rt] 
\ee
We scale uniformly $A_h\to\infty$ and apply the stationary phase approximation to the integral over the subspace containing $\{g_{ve}\}_{e\in E_{\rm int}}$, where $E_{\rm int}$ is the subset of the edges that does not connect to the boundary. We aim at an asymptotic expansion in $A_h$. The ``action'' $S$ relates only to internal faces $h$ and thus only depends on $\{g_{ve}\}_{e\in E_{\rm int}}$. The real part of $S$ reaches the maximum $\re(S)=0$ if and only if \eqref{g0solution} is satisfied by $g_h$ for all $h$, and it implies $S=0$. We denote by $\cc_{\rm int}$ the space of $\{g_{ve}\}_{e\in E_{\rm int}}$
satisfying \eqref{g0solution} for all internal faces $h$. A point in $\cc_{\rm int}$ is denoted by $g_{\rm 0,int}(\vec{u})$, where $\vec{u}$ parametrizes the solutions. $S$ is analytic in a neighborhood of $\cc_{\rm int}$ (see Appendix \ref{Analyticity of shg}).

The space $\cc_{\rm int}$ is the critical manifold of $S$: At any $g_{\rm 0,int}\in \cc_{\rm int}$
\be
\frac{\partial}{\partial{g_{ve}}}\sum_h A_h s_h\Big|_{g_{\rm 0,int}}=\sum_{h;e\subset\partial h}A_h\sum_{k_h=1}^\infty\frac{\partial}{\partial{g_{ve}}}\zeta^{(h)}_{k_h}\Big|_{g_{\rm 0,int}} e^{- \b_h\sqrt{k_h(k_h+2)}}=0,\label{EOMclos}
\ee
because the derivative of $\zeta^{(h)}_k$ at ${g_{\rm 0,int}}$ is proportional to $\tr_{(k,\rho)}[P_kJ^{IJ}P_k]$, which vanishes for all so(1,3) generator $J^{IJ}$. The exponent $S$ satisfies $\re(S)=\partial_g S=0$ only on $\cc_{\rm int}$. Therefore, the leading order of the asymptotic expansion as $A_h\to\infty$ localizes the integral onto the critical manifold $\cc_{\rm int}$.

The critical manifold $\cc_{\rm int}$ interestingly relates to the space of SU(2) flat connections on $\ck$: We have fixed the $\Slc$ gauge freedom in $\sa_\ck$ by choosing an edge $e_0(v)$ at every $v$ and setting $g_{ve_0(v)}=1$, and $e_0(v)\neq e_0(v')$ for $v\neq v'$. For some convenience becoming clear below, we require that $e_0(v)\in E_{\rm int}$, i.e.  $e_0(v)$ does not connect to $\partial\ck$, for all $v$. After the $\Slc$ gauge fixing, there is still the residue SU(2) gauge freedom:
\be
g_{ve}\to u_{e_0(v)}^{-1}g_{ve}u_e,\qquad u_e\in\Su,
\ee
which leaves the gauge fixing $g_{ve_0(v)}=1$ invariant. This SU(2) gauge transformation induces the on-shell gauge symmetry on $\cc_{\rm int}$. We denote by the on-shell gauge orbit by $\cg_{\rm int}$. The critical manifold $\cc_{\rm int}$ has the following property:

\begin{theorem}\label{SU2flatconn}
Assume the root 2-complex $\ck$ to be connected and sufficiently refined, 

(1) $\cc_{\rm int}/\cg_{\rm int}$ is identical to the moduli space $\cm(\ck)$ of SU(2) lattice flat connections. $\cm(\ck)$ is the space of $\{g_{e}\}_{e\in E_{\rm int}}$, $g_e\in\Su$, modulo gauge transformations at vertices and satisfying the flatness: the loop holonomy $\overrightarrow{\prod}_{e\subset\partial h}g_e$ is trivial around every internal face $h$. 

(2) The moduli space $\cm(\ck)$ is identical to the space of representations of the lattice fundamental group $\pi_1(|\ck|)$ in $\Su$ modulo conjugation:
\be
\cc_{\rm int}/\cg_{\rm int}\cong \mathcal{M}(\ck) \cong \mathrm{Hom}(\pi_1(|\ck|), \Su) / \Su.
\ee
where the lattice fundamental group $\pi_1(|\ck|)$ equals to $\pi_1(\mathrm{sk}(\ck))$ quotient by the normal subgroup generated by $\partial h$. When $\ck$ is embedded in a 4-manifold $\sm_4$, and $\ck $ is sufficiently refined such that $\pi_1(\sm_4)=\pi_1(|\ck|)$, The quotient space $\cc_{\rm int}/\cg_{\rm int}$ is identical to the moduli space of SU(2) flat connection on $\sm_4$:
\be
\cc_{\rm int}/\cg_{\rm int}\cong \mathrm{Hom}\lt(\pi_1(\sm_4),\Su\rt)/\Su.\label{CCequalHom}
\ee

\end{theorem}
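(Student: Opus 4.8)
The plan is to prove both statements by first reducing the $\Slc$ data on $\cc_{\rm int}$ to genuine $\Su$ lattice data, and then invoking the classical correspondence between lattice flat connections and representations of the fundamental group. For part (1), I would start from the defining equations \eqref{g0solution}, which assert $g_{ve}^{-1}g_{ve'}\in\Su$ whenever $e,e'$ bound a common internal face $h$, together with $\overrightarrow{\prod_{v\in\partial h}}g_{ve}^{-1}g_{ve'}=1$. Working in the gauge $g_{ve_0(v)}=1$ with $e_0(v)\in E_{\rm int}$, I would first show that on $\cc_{\rm int}$ every half-edge variable $g_{ve}$ with $e\in E_{\rm int}$ in fact lies in $\Su$. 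This is where the hypothesis that $\ck$ is connected and \emph{sufficiently refined} enters: it guarantees that the internal faces meeting at each vertex $v$ form a connected incidence pattern, so that one can chain the relations $g_{ve_0(v)}^{-1}g_{ve_1}\in\Su$, $g_{ve_1}^{-1}g_{ve_2}\in\Su$, $\dots$, starting from $g_{ve_0(v)}=1$, to conclude $g_{ve}\in\Su$ for every internal edge at $v$.

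Next I would repackage the half-edge variables into edge holonomies $G_e:=g_{ve}g_{v'e}^{-1}\in\Su$ for each internal edge $e$ running from $v$ to $v'$. A direct regrouping of the corner product shows that, up to an overall conjugation, the spinfoam face holonomy equals the ordered lattice loop holonomy $\overrightarrow{\prod_{e\subset\partial h}}G_e$; hence the flatness in \eqref{g0solution} is equivalent to $\overrightarrow{\prod_{e\subset\partial h}}G_e=1$ for every internal $h$. I would then track the residual gauge action $g_{ve}\to u_{e_0(v)}^{-1}g_{ve}u_e$: the edge parameters $u_e$ cancel in $G_e$, while the vertex parameters $\tilde u_v:=u_{e_0(v)}$ act as $G_e\to\tilde u_v^{-1}G_e\tilde u_{v'}$, i.e. exactly the lattice gauge transformations at vertices. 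Conversely, the fibre of $\{g_{ve}\}\mapsto\{G_e\}$ is precisely the orbit of the cancelling $u_e$, which lies inside $\cg_{\rm int}$. This yields a well-defined bijection $\cc_{\rm int}/\cg_{\rm int}\cong\cm(\ck)$; surjectivity follows by lifting any flat $\{G_e\}$ to an $\Su$-valued half-edge configuration using $e_0(v)$ as reference, which automatically satisfies \eqref{g0solution} since all $g_{ve}\in\Su$.

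For part (2), I would apply the holonomy correspondence in its lattice form. Choosing a spanning tree of the $1$-skeleton and gauging $G_e=1$ along it, the remaining edge variables freely parametrize a homomorphism of the free group $\pi_1(\mathrm{sk}(\ck))$; the flatness relations $\overrightarrow{\prod_{e\subset\partial h}}G_e=1$ impose precisely the relators generating the normal subgroup $\langle\langle\partial h\rangle\rangle$, and the leftover global conjugation at the basepoint produces the quotient by $\Su$. This gives $\cm(\ck)\cong\mathrm{Hom}(\pi_1(|\ck|),\Su)/\Su$ with $\pi_1(|\ck|)=\pi_1(\mathrm{sk}(\ck))/\langle\langle\partial h\rangle\rangle$. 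The final identification \eqref{CCequalHom} then follows from the purely topological fact that the fundamental group of a CW complex is determined by its $2$-skeleton: for $\ck$ refined enough that $|\ck|$ realizes the relevant $2$-skeleton of $\sm_4$, one has $\pi_1(|\ck|)=\pi_1(\sm_4)$.

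I expect the main obstacle to be the reduction step in part (1): proving that the \emph{local} $\Su$ conditions, imposed only on pairs of edges sharing a single face, force \emph{all} internal half-edge holonomies into $\Su$ after gauge fixing. This is exactly where ``sufficiently refined'' must be used quantitatively, to ensure the incidence graph of internal faces at each vertex is connected so that the chaining argument closes up, and that the flatness conditions are globally compatible around bubbles so that no projective $(-1)$ ambiguity survives---the latter already excluded by the preceding theorem. Once this reduction is in place, the spinfoam-to-lattice repackaging and the classical flat-connection/representation-variety dictionary are routine.
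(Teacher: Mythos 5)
Your proposal is correct and follows essentially the same route as the paper: propagating the local $\Su$ condition across the faces meeting each vertex (with ``sufficiently refined'' ensuring the incidence structure is connected enough for the chaining to close), reducing the half-edge data to lattice edge variables modulo vertex gauge transformations, and then the standard spanning-tree/tree-gauge identification of $\cm(\ck)$ with $\mathrm{Hom}(\pi_1(|\ck|),\Su)/\Su$, exactly as in the paper's Appendix B. The only cosmetic difference is that you package the reduction via the gauge-invariant edge holonomies $G_e=g_{ve}g_{v'e}^{-1}$, whereas the paper uses the edge gauge freedom $u_e$ to set one of the two half-edge variables to the identity; both implement the same quotient.
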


\begin{proof}

Let us consider the first condition in \eqref{g0solution}. At any vertex $v$ and any internal face $h$ bounded by $v$ and $e_0$ another edge $e_1$ connecting $v$, $g_{ve_0}^{-1} g_{ve_1}\in\Su$ restricts $g_{ve_1}\in\Su$. For any other internal face $h'$ bounded by $v$ and $e_1$ and another edge $e_2$ connecting $v$, $g_{ve_1}^{-1} g_{ve_2}\in\Su$ restricts $g_{ve_2}\in\Su$. The restriction can propagate to all $e$ connecting to $v$ and thus gives $g_{ve}\in\Su$ for all $e\in E_{\rm int}$ \footnote{Consider the dual cellular complex $\ck^*$, where $v$ is dual to a 4-cell $v^*$, and $e$ is dual to a 3d polyhedron $e^*$ on the boundary of the 4-cell. $f\subset \ck$ bounded by $e,e'$ at $v$ is dual to a face $f^*$ shared by two 3d polyhedra $e^*,e'{}^*$. This argument of propagating restrictions would become invalid if $\partial v^*$ contained some boundary polyhedra in $\partial\ck^*$ such that their complement in $\partial v^*$ is disconnected. This obstruction clearly cannot happen for simplicial $\ck^*$. In general even for non-simplical $\ck^*$, this obstruction can be removed by refining $\ck^*$.
}. 
The SU(2) gauge transformation $u_e$ at every $e\neq e_0(v)$ further transforms one of $g_{ve}$ and $g_{v'e}$ to $1$. Therefore, there is an 1-to-1 correspondence between solutions in $\cc_{\rm int}/\cg_{\rm int}$ and SU(2) lattice flat connections $\{g_{e}\}_{e\in E_{\rm int}}$, modulo SU(2) gauge transformation $u_{e_0(v)}\equiv u_v$ at vertices
\be
g_{e}\to u_{s(e)}^{-1} g_e u_{t(e)}.
\ee
The flatness means that the loop holonomy $g_{\partial h}$ along $\partial h$ is trivial for every internal face $h$. We denote by $\cm(\ck)$ to be the space of all SU(2) lattice flat connections modulo gauge equivalence, and we have $\cc_{\rm int}/\cg_{\rm int}\cong \cm(\ck)$.

There is an bijection between $\cm(\ck)$ and $\mathrm{Hom}\lt(\pi_1(|\ck|),\Su\rt)$ quotient by SU(2) conjugation. The detailed proof of this statement is given in Appendix \ref{proofTheoremIII3}. When $\ck$ is embedded in $\sm_4$, the complex $\ck$ is assumed to be sufficiently refined, such that any loop $l\in\pi_1(\sm_4)$ is homotopic to a loop $l'$ that lies entirely on the 1-skeleton ${\rm sk}(\ck)$ of $\ck$ and $\pi_1(\sm_4)=\pi_1(|\ck|)$. As a result,
\be
\cc_{\rm int}/\cg_{\rm int}\cong \cm(\ck)\cong \mathrm{Hom}\lt(\pi_1(|\ck|),\Su\rt)/\Su\cong\mathrm{Hom}\lt(\pi_1(\sm_4),\Su\rt)/\Su. 
\ee
The space $\mathrm{Hom}\lt(\pi_1(\sm_4),\Su\rt)/\Su$ is identical to the moduli space of SU(2) flat connections on $\sm_4$.


\end{proof}

For example, $\cc_{\rm int}/\cg_{\rm int}$ is 3-dimensional for $\sm_4=I \times \mathbb{T}^3$, and $\cc_{\rm int}/\cg_{\rm int}$ is 0-dimensional for $\sm_4= I\times \mathbb{S}^3$, where $I\subset\R$ is a time interval. 

The leading asymptotic behavior of $\sa_{\ck}$ as $A_h\to\infty$ is given by \footnote{$\cf_h=\lt[\b_h\l_h\sum_{k=1}^\infty \sqrt{k(k+2)} d_k^2 e^{- \b_h\sqrt{k(k+2)}}\rt]^{-1}$ is constant on $\cc_{\rm int}$. }
\be
\sa_{\ck}
&=& e^{\sum_h \beta_h A_h}\bA^{-\mathscr{D}_{\rm int}}\int_{\cc_{\rm int}}\rmd \mu(\vec u) \int \prod_{(v,e_b)} \rmd g_{ve_b}\, {\prod_b\o_b|_{g_{\rm 0,int}(\vec u)}}\lt[\phi(\vec{u})+O(\bA^{-1})\rt] ,\label{AKexponen}
\ee
where $e_b$ denotes the edges connecting to the boundary and $g_{ve_b}\in\Slc$. The exponent $\mathscr{D}_{\rm int}>0$ is one half of the dimension of the Hessian matrix $\partial^2_g S|_{g_{\rm 0,int}(\vec u)}$, which we assume to be nondegenerate, and $\phi(\vec{u})$ relates to the determinant of the Hessian matrix. The nondegeneracy of Hessian matrix is proven in the case of trivial $\pi_1(|\ck|)$ in Section \ref{Nondegenerate Hessian matrix}. $\bar{A}$ is the mean value of $\{A_h\}_h$. The uniform scaling of $A_h$ corresponds to $A_h=\bA a_h$ and scaling $\bA\to \infty$.

This asymptotic behavior shows that the divergence of $\sa_\ck$ is due to the prefactor $e^{\sum_h \beta_h A_h}\bA^{-\mathscr{D}_{\rm int}}$. The parameters $\beta_h$ might be interpreted as a ``surface tension''. On the other hand, this behavior might interesting relate to the statistical mechanics of quantum geometry, because the key step \eqref{shintegral} is almost the same as the state-counting of LQG black hole entropy (see e.g. \cite{lqgee1,Agullo:2010zz}), except that $\zeta_k^{(h)}$ is generally complex, whereas the analog in state-counting is real and positive. This relation is also suggested by studying entanglement entropy in spinfoam theory \cite{spinfoamstack}.


Localizing the integral in $\sa_\ck$ onto the space of SU(2) flat connection makes \eqref{AKexponen} share the similarity with the SU(2) BF theory \cite{Ooguri:1992eb}, although \eqref{AKexponen} is equipped with a different integral measure. Indeed, we split the coordinates: $\vec u=(\vec r,\vec u')$ where $\vec u'$ are along the gauge directions on $\cg_{\rm int}$ and $\vec r$ are coordinates of the moduli space $\cm(\ck)$.  In the integrand, for any boundary face $b$, $\o_b$ is gauge invariant at $v$ disconnect to the boundary, and the SU(2) gauge freedom $u_v$ at $v$ connecting to the boundary by $e_b$ can be removed by $g_{v e_b}\to u_v^{-1} g_{v e_b}$ and the invariance of the Haar measure $\rmd g_{v e_b}$, so the boundary contribution only depends on $\vec{r}$. We introduce the notation
\be
\sa_{\G,\fs}\lt(\vec{r},\{A_b\},\{\l_b\},\vec \ff\rt):=\int \prod_{(v,e_b)} \rmd g_{ve_b}\, {\prod_b\o_b|_{g_{\rm 0,int}(\vec r)}}.
\ee
The label $\G$ is the boundary of $\ck$, and $\fs$ will be explained in a moment. Eq.\eqref{AKexponen} reduces to an integral of $\sa_{\G,\fs}$ over the moduli space of flat SU(2) lattice connections: Schematically, 
\be
\sa_{\ck}
&=& e^{\sum_h \beta_h A_h}\bA^{-\mathscr{D}_{\rm int}}\int_{\cc_{\rm int}}\rmd \mu(\vec r,\vec u') \lt[\phi(\vec r,\vec{u}')+O(\bA^{-1})\rt]\sa_{\G,\fs}(\vec r) \nonumber\\
&=&e^{\sum_h \beta_h A_h}\bA^{-\mathscr{D}_{\rm int}}\int_{\cm(\ck)}\rmd\rho(\vec r)\,\sa_{\G,\fs}(\vec r).
\ee
The measure $\rmd\rho(\vec r)$ is obtained by integrating out the gauge freedom $\vec u$ and possibly depend on the choice of $\ck$. Although $\sa_\ck$ reduces to the integral over the moduli space of flat connections, the measure is generally different from the path integral measure of BF theory. 

\section{Trivial topology and triangulation independence}\label{Trivial topology and triangulation independence}

In this section, we focus on the case of trivial $\pi_1(|\ck|)$. In this case, the space $\cc_{\rm int}/\cg_{\rm int}$ is 0-dimensional, so $\vec r$-coordinates disappear, and $\vec{u}$ only parametrizes the on-shell SU(2) gauge freedom. Therefore,
\be
\sa_{\G,\fs}=\sa_{\G,\fs}\lt(\{A_b\},\{\l_b\},\vec \ff\rt)=\int \prod_{(v,e_b)} \rmd g_{ve_b}\, \prod_b\mathring{\o}_{b,\pm},\label{AGamma}
\ee 
and it only depends on the following data relating to the boundary:
\begin{itemize}

\item The boundary root graph $\G=\partial\ck$.

\item The boundary state $\vec\ff$, boundary area cut-off $A_b\equiv A_\fl$, coupling constant $\l_b\equiv\l_\fl$, where $\fl=b\cap \G$ is the link of $\G$ along $\partial b$.

\item Assigning a sign for each link by $\fs: L(\G)\mapsto\Z_2$, where $L(\G)$ denotes the set of links in $\G$. The assignment is $\fs(\fl)=+$ if the number of vertices along $b$ is greater than 1, otherwise $\fs(\fl)=-$. 

\end{itemize}
In \eqref{AGamma}, $\mathring{\o}_b$ equals $\o_b$ evaluated at $g_{ve}=1$ for all $e\neq e_b$. Explicitly, for a boundary face $b$ with vertices labelled by $1,\cdots,l$, where the vertices $v_1$ and $v_l$ are connected to the boundary by the edges $e_b$ and $e_b'$ along the boundary of $b$, assuming $l\geq 2$
\be
\mathring{\o}_{b,+}=\sum_{p_b=1}^{\infty}\sum_{k_1,\cdots,k_{p_b}\in\Z_+}^{\infty}\prod_{i=1}^{p_b}\mathring{\zeta}^{(b,+)}_{k_{i}}\Theta\left(A_{b}-\alpha_{p_b,\vec{k}}\right)\qquad
\mathring{\zeta}^{(b,+)}_{k_i}= \l_b d_{k_i}\lag \ff^{(i)}_1\lt|P_kg_{v_1, e_b}^{-1}P_{k}g_{v_l,e_b'}P_k\rt|\ff^{(i)}_2\rag.\label{zetabkob1}
\ee
This result does not depend on the number $l$ of vertices of the face $b$, as far as $l\geq 2$. However, for the special case that $l=1$,  we have
\be
\mathring{\o}_{b,-}=\sum_{p_b=1}^{\infty}\sum_{k_1,\cdots,k_{p_b}\in\Z_+}^{\infty}\prod_{i=1}^{p_b}\mathring{\zeta}^{(b,-)}_{k_{i}}\Theta\left(A_{b}-\alpha_{p_b,\vec{k}}\right),\qquad
\mathring{\zeta}^{(b,-)}_{k_i}= \l_b d_{k_i}\lag \ff^{(i)}_1\lt|P_kg_{v_1, e_b}^{-1}g_{v_l,e_b'}P_k\rt|\ff^{(i)}_2\rag=\zeta^{(b)}_{k_i}, \label{zetabkob2}
\ee
so $\mathring{\o}_{b,-}=\o_b$ is not affected by the restriction, because in this case, $\zeta_{k}^{(b)}$ and $\o_b$ are independent of $g_{ve}$ for $e$ not connecting to boundary.

Insert this result into \eqref{AKexponen}, the dependence of $\sa_\ck$ on the bulk and boundary data factorizes
\be
\sa_{\ck}
&=&\sn_\ck \sa_{\G,\fs}\lt(\{A_b\},\{\l_b\},\vec \ff\rt),\qquad \sn_\ck= e^{\sum_h \beta_h A_h}\bA^{-\mathscr{D}_{\rm int}}\int_{\cc_{\rm int}}\rmd \mu(\vec u) \lt[\phi(\vec{u})+O(\bA^{-1})\rt] .\label{AKexponenSimp}
\ee
Here $\sn_\ck$ depends on the root 2-complex $\ck$ but is independent of the boundary data, so $\sn_\ck$ is just a normalization constant of the amplitude. The renormalized stack amplitude 
\be
\sa_{\ck,\mathrm{ren}}\equiv\sa_{\ck}/\sn_\ck=\sa_{\G,\fs}
\ee
only depends on the above data relating to the boundary but is independent of the bulk structure of $\ck$. In particular, $\sa_{\ck,\mathrm{ren}}$ is invariant under any refinement of $\ck$ that preserves $\G,\fs$ and the triviality of fundamental group.

Let us discuss the sum over root complexes: We first sum over root complexes $\ck(\fs)$ that shares the same boundary data $\G,\fs$: $\sa_\fs=\sum_{\ck(\fs)}a_{\ck(\fs)} \sa_{\ck(\fs)}$. The result also equals to $\sa_{\G,\fs}$ up to renormalization:
\be
\sa_{\fs,\mathrm{ren}}=\sa_\fs/\sn_\fs=\sa_{\G,\fs},\qquad \sn_\fs=\sum_{\ck(\fs)}a_{\ck(\fs)}\sn_{\ck(\fs)},\qquad a_{\ck(\fs)}\in\C.
\ee
Then up to normalization, the complete amplitude in \eqref{sa6} is generally a finite linear combination of $\sa_{\fs,\mathrm{ren}}$ over $\fs\in\Z_2^{|L(\G)|}$:
\be
\sa= \sum_{\fs\in \Z_2^{|L(\G)|}}b_\fs\, \sa_{\fs,\mathrm{ren}}=\sum_{\fs\in \Z_2^{|L(\G)|}}b_\fs\, \sa_{\G,\fs},\qquad b_\fs\in\C.
\ee
Here, all $\ck$ in the sum are assumed to have trivial $\pi_1(|\ck|)$. The sum over topologies is beyond our discussion in this paper.

Although the above discussion uses the stacked boundary state that leads to $\mathring{\o}_{b,\pm}$ as a sum over spins and multiplicity $p_b$, it is also valid for non-stacked boundary state. All above discussions including Eqs.\eqref{AKexponen} - \eqref{AGamma} and \eqref{AKexponenSimp} applies to any smooth function $\o_b(g_{ve})$. For instance, one may modify $\sa_\ck$ by removing the sums in $\o_b$ and restrict to the root graph $p_b=1$, then the result is given by the same restriction to $\mathring{\o}_{b,\pm}$ in \eqref{zetabkob1} and \eqref{zetabkob2}.

\section{Discussion}\label{Discussion}

The results in the last section are based on uniformly scaling the cut-offs $A_h\to\infty$, while keeping the boundary state fixed. This uniform scaling is motivated by relating $A_h$ to the cosmological constant $A_h\sim \ell_C^2/\ell_P^2$, where the cosmological constant is $\L=1/\ell_C^2$, as the areas should not exceed the maximal scale of the cosmological horizon \cite{HHKR,curvedMink}. Uniformly scaling the cut-offs $A_h\to\infty$ is equivalent to the limit of small cosmological constant: $\ell_C^2/\ell_P^2\to\infty$. This limit might relate to the UV limit, because when we zoom in to microscopic scales, any macroscopic curvature from a cosmological constant becomes negligible. Thus, the UV limit of quantum gravity should correspond to a regime where the cosmological constant is negligible.

A key mechanism driving our result is that in the limit $A_h\to \infty$, the integral over $\mathrm{SL}(2,\mathbb{C})$ group elements localizes onto the critical manifold $\cc_{\rm int}$, which is the space of SU(2) flat connections in the 4-dimensions. The localization drastically simplifies the dynamics, revealing a topological theory sharing similarities to $\mathrm{SU}(2)$ BF theory. This suggests that the physical quantitiess in this limit should depend only on the topology of the manifold, not its detailed geometric structure including the choice of triangulation. This topological nature is precisely why the final renormalized amplitude $\mathscr{A}_{\rm ren} = \mathscr{A}_{\Gamma,\fs}$ becomes independent of the bulk structure of 2-complex $\mathcal{K}$. The factor $\mathscr{N}_{\mathcal{K}}$ absorbs all the non-universal, triangulation-dependent parts of the amplitude.

The emergence of a triangulation-independent, topological theory in the limit suggests a fix point relating to the UV, similar to the Asymptotic Safety scenario of quantum gravity. In the context of spinfoams, refining the triangulation is analogous to probing smaller scales. The fact that the renormalized amplitude $\sa_\ck$ is invariant under bulk refinement of the complex $\ck$ suggests a scale-invariant fixed point. The amplitude becoming independent of the discrete triangulation is the spinfoam analog of scale-invariance. 

The localization of the spinfoam integral effectively breaks the gauge symmetry in the bulk from $\mathrm{SL}(2,\mathbb{C})$ to SU(2). The origin of this phenomenon lies in the simplicity constraint which results in the projector $P_k$ in e.g. \eqref{zetakh}. However, since the Haar integral of $g_{ve_b}\in\Slc$ relating to the boundary state is not affected, the Lorentz covariance discussed in \cite{Rovelli:2010ed} is still valid here\footnote{A boundary $\Slc$ transformation $\L$ changing the timelike normal of a boundary polyhedron (dual to $e_b$) leads to $P_k\to P_k'= \L^{-1} P_k \L$. One of $\L$ and $\L^{-1}$ is absorbed into the Haar integral $\int d g_{ve_b}$, while the other transforms the boundary SU(2) holonomy $H_\fl$ by $\L_{s(\fl)}H_{\fl}\L_{t(\fl)}^{-1}$.}.

We need to explain how our formalism connect to Infrared behavior of the theory: We expect that the infrared regime of the theory should correspond to a finite $A_h$. In this case, the stack amplitude $\sa_\ck$ recovers as an expansion in the coupling constant $\l_f$, where the leading order is the spinfoam amplitude on the root complex $\ck$. Moreover, the study of entanglement entropy in \cite{spinfoamstack} suggests $\l_f$ to relate to the Barbero-Immirzi parameter $\g$ and become small as in the small $\g$ regime. Therefore, the spinfoam amplitude on the root complex becomes dominant for finite $A_h$ and small $\g$. This connects to the existing semiclassical results of spinfoam e.g. \cite{Han:2023cen,lowE2,claudio,claudio1,propagator3,propagator2,propagator1,Alesci:2009ys}, which are based on the root complex (dual to simplicial complex) and relate to the small $\g$ regime. In addition, the regime of these semiclassical results is that both the internal spin cut-offs and boundary spins are scaled uniformly large but finite, and it is different from the limit here: $A_h\to\infty$ while keeping boundary state fixed.

\section{Explicit computations for trivial topology}\label{Some explicit computations}

In this section, we use some explicit parametrizations of group variables $g_{ve}$ to demonstrate the above general argument. The parametrizations is also useful for compute the Hessian matrix of $S$. All discussions in this section focus on the case that $\pi_1(|\ck|)$ is trivial. 

\subsection{Parametrizations}

Given the root complex $\ck$, we number the vertices by $v=v_i$, $i=1,\cdots, n$. For every edge $e=(i,j)$ for certain $i,j=1,\cdots,n$ ($i\neq j$), the pair of group variables $g_{v_ie} $ and $g_{v_j e}$ are re-labelled as $g_{ij}$ and $g_{ji}$. We use the following decomposition to parametrize each $g_{ij}\in\Slc$
\be
g_{ij}=u_{ij}e^{-ir_{ij}K^{3}}v_{ij},\qquad u_{ij}=e^{-i\psi_{ij}^{\prime}L^{3}}e^{-i\theta_{ij}^{\prime}L^{2}}\in\Su,\qquad v_{ij}=e^{-i\psi_{ij}L^{3}}e^{-i\theta_{ij}L^{2}}e^{-i\phi_{ij}L^{3}}\in\Su,
\ee
where $\psi_{ij}^{\prime},\theta_{ij}^{\prime},\psi_{ij},\theta_{ij},\phi_{ij}$ are Euler angles. To fix the $\Slc$ gauge freedom, we set $r_{ij}=0$ for $g_{ij}=g_{v e_0(v)}$. The Lie algebra generators relate to the Pauli matrices by $L^i=\frac{1}{2}\sigma^i,\ K^i=\frac{i}{2}\sigma^i$.

For any internal face $h$, we label the vertices of $h$ by $1,\cdots,m$,
\be
\zeta_{k}^{(h)}&=&\lambda_{h}d_{k}\mathrm{Tr}_{(k,\rho)}\left[g_{12}P_{k}g_{21}^{-1}g_{23}P_{k}g_{32}^{-1}\cdots g_{ij}P_{k}g_{ij}^{-1}\cdots g_{m1}P_{k}g_{1m}^{-1}\right]
\ee
where
\be
g_{ij}P_{k}g_{ji}^{-1}=\left(u_{ij}e^{-ir_{ij}K^{3}}v_{ij}\right)P_{k}\left(v_{ji}^{-1}e^{ir_{ji}K^{3}}u_{ji}^{-1}\right).
\ee
is the ``holonomy'' along the edge $(i,j)$.

Choosing a base vertex $v_*$ and a maximal spanning tree $\ct$ in the 1-skeleton of $\ck$. A maximal spanning tree is a subgraph that connects all vertices but contains no loops, and for any vertex $v$, there is a unique path $\calp_{v_*\to v}$ within the tree $\ct$ from $v_*$ to $v$. We make a change of variable
\be
v_{ij}v_{ji}^{-1}= u_{ij}^{-1} x_i^{-1} H_{ij}x_j u_{ji},\qquad x_i=\mathfrak{hol}({\calp_{v_*\to i}})\Big|_{r_{ij}=0},
\ee
where $H_{ij}=H_{ji}^{-1}$ and $\mathfrak{hol}(\calp)$ denotes the holonomy along the path $\calp$ made by $g_{ij}P_{k}g_{ji}^{-1}$. This is valid by the critical point condition \eqref{g0solution} that restricts
\be
r_{ij}\approx 0.\label{H=1}
\ee
Here and in the following, we use $\approx$ for the equality that holds only on the critical manifold $\cc_{\rm int}$. It implies
\be
g_{ij}P_{k}g_{ji}^{-1}\approx \begin{cases}
x_i^{-1} x_j,& (i,j)\in \ct\\
x_i^{-1} H_{ij}x_j,& (i,j)\not\in \ct
\end{cases}
\ee
Then the critical point condition \eqref{g0solution} further restricts
\be
H_{12}H_{23}\cdots H_{m1}\approx1.\label{HHH}
\ee
for any internal face $h$. The set of $\{H_{ij}\}_{(i,j)}$ satisfying the \eqref{HHH} defines a lattice flat connection under the ``tree gauge'' that $H_{ij}=1$ along $\ct$. The critical manifold $\cc_{\rm int}$ is the space of $\{H_{ij}\}_{(i,j)}$ and some on-shell gauge freedom in $u_{ij},v_{ij}$.

By Theorem \ref{SU2flatconn} and trivial $\pi_1(|\ck|)$, the SU(2) holonomies made by $H_{ij}$ are trivial for all loops in $\mathrm{sk}(\ck)$. It implies
\be
H_{ij}\approx 1,\qquad \forall (i,j),
\ee
because $\calp_{v_*\to i}\circ (i,j)\circ \calp_{j\to v_*}$ form a closed loop and all $H_{ij}$ along $\ct$ are trivial. In $\cc_{\rm int}$, all degrees of freedom of $H_{ij}$ are fixed, so $\cc_{\rm int}$ are completely parametrized by the on-shell gauge freedom $u_{ij},v_{ij}$.

For a boundary face $b$, whose vertices are labelled by $1,\cdots,l$ with $l\geq 2$
\be
\zeta^{(b)}_k=\l_b d_k\lag \ff_1\lt|P_kg_{1, e_b}^{-1}g_{12}P_{k}g_{21}^{-1}g_{23}P_{k}g_{32}^{-1}\cdots g_{l1}P_{k}g_{1l}^{-1} g_{l,e_b'}P_k\rt|\ff_2\rag 
\ee
where 
\be
g_{ij}P_{k}g_{ji}^{-1}=\lt(u_{ij}e^{-ir_{ij}K^{3}}u_{ij}^{-1}\right)P_k\lt(x_iH_{ij}x_j^{-1}\rt)P_k\left(u_{ji}e^{ir_{ji}K^{3}}u_{ji}^{-1}\rt).\label{gPgParameter}
\ee
By changing variables that leaves the Haar measure invariant
\be
g_{1, e_b}=x_1^{-1}g'_{1, e_b},\qquad g_{m, e_b'}=x_m^{-1}g'_{m, e_b'},
\ee
we obtain $\zeta^{(b)}_k$ on $\cc_{\rm int}$ 
\be
\zeta^{(b)}_k\approx \l_b d_k\lag \ff_1\lt|P_kg_{1, e_b}^{-1}P_{k}H_{12}H_{23}\cdots H_{l-1,l}P_{k} g_{l,e_b'}P_k\rt|\ff_2\rag.
\ee
is constant on $\cc_{\rm int}$ due to $H_{ij}=1$:
\be
\zeta^{(b)}_k\approx \l_b d_k\lag \ff_1\lt|P_kg_{1, e_b}^{-1}P_{k}g_{l,e_b'}P_k\rt|\ff_2\rag,
\ee
which reproduces \eqref{zetabkob1}.

\subsection{Nondegenerate Hessian matrix}\label{Nondegenerate Hessian matrix}

For the stationary phase analysis for \eqref{AKexponen0}, we split the integral of $\{g_{ve}\}_{e\in E_{\rm int}}$ into the directions along $\cc_{\rm int}$ and the transverse directions. Using the parametrization in \eqref{gPgParameter}, the transverse directions are parametrized by $r_{ij}$ and $H_{ij}$, while the integral along $\cc_{\rm int}$ is over the on-shell gauge freedom $u_{ij},v_{ij}$. The integral over on-shell gauge freedom gives $\int\rmd\mu(\vec u)\cdots$ in \eqref{AKexponen}. The stationary phase approximation is applied to the integral over the transverse directions. Correspondingly, the Hessian matrix of the action $S=\sum_h A_h[s_h(g)-\beta_h] $ is computed with respect to the transverse directions. We denote the Hessian matrix by $\mathbb{H}$ and use $\a,\b$ as the coordinate index for the transverse directions:
\be
\mathbb{H}_{\a\b}=\sum_{h}\frac{A_{h}}{\bA}\partial_{\a}\partial_{\b}s_{h}\left(g\right)\Big|_{\cc_{\rm int}},\qquad 
\partial_{\a}\partial_{\b}s_h\left(g\right)\Big|_{\cc_{\rm int}}=\frac{\sum_{k=1}^{\infty}\partial_{\a}\partial_{\b}\zeta^{(h)}_{k}\left(g\right)\Big|_{\cc_{\rm int}}e^{-\beta_h\sqrt{k(k+2)}}}{\sum_{k=1}^{\infty}\lambda_{h}d_{k}^{2}\sqrt{k(k+2)}e^{-\beta_h\sqrt{k(k+2)}}}
\ee
To simplify the formulae, we assume $A_h=A$ to be constant, then 
\be
\mathbb{H}_{\a\b}=\sum_{h}\lt(C_0^{(h)}\rt)^{-1}\sum_{k_h=1}^\infty\partial_{\a}\partial_{\b}\t_{k_{h}}^{(h)}\left(g\right)\Big|_{\cc_{\rm int}}e^{-\beta_h\sqrt{k_{h}(k_{h}+2)}},\qquad C_0^{(h)}=\sum_{k=1}^{\infty} d_{k}^{2}\sqrt{k(k+2)}e^{-\beta_h\sqrt{k(k+2)}}.
\ee
where $\tau_k^{(h)}=\zeta_k^{(h)}/\l_h$.

To compute the second derivatives of $\t_k^{(h)}$, it is convenient to expand
\be
e^{-ir_{ij}K^{3}}&=& 1-ir_{ij}K^{3}-\frac{1}{2}r_{ij}^{2}\left(K^{3}\right)^{2}+O(r^3),\\
H_{ij}&=&e^{i\sum_{a=1}^{3}t_{ij}^{a}L^{a}}=1-i\sum_{a=1}^{3}t_{ij}^{a}L^{a}-\frac{1}{2}\sum_{a,b=1}^{3}t_{ij}^{a}t_{ij}^{b}L^{a}L^{b}+O(t^3),
\ee
where $t^a_{ij}=-t^a_{ji}$. Here $H_{ij}$ are not along the maximal spanning tree. $r_{ij},t^a_{ij}$ are coordinates transverse to $\cc_{\rm int}$.

The resulting Hessian matrix $\mathbb{H}_{\a\b}$ is a polynomial of the Barbero-Immirzi parameter $\g$. But $\mathbb{H}_{\a\b}$ becomes simplified if we only focus on the leading order of small $\gamma$. In particular, due to the simplicity constraint,
\be
\langle j,m|P_{k}K^{3}P_{k}|j,n\rangle=-\gamma\langle j,m|L^{3}|j,n\rangle=O\left(\gamma\right),
\ee
the Hessian matrix is a direct sum of $r$-$r$ and $t$-$t$ blocks as $\g\to0$, due to
\be
\frac{\partial^{2}}{\partial r_{ij}\partial t_{mn}^{a}}\t_{k}^{(h)}\Big|_{\cc_{\rm int}}=O\left(\gamma\right).\label{rtoffdiagonal}
\ee

Furthermore, the $r$-$r$ block is block-diagonal, where each small block associates to $r_{ij}$ at a given vertex $i$. Indeed, at the vertex $i$, we have the diagonal entries
\be
&&\sum_{h}\lt(C_0^{(h)}\rt)^{-1}\sum_{k_h=1}^\infty\frac{\partial^{2}}{\partial r_{ij}^{2}}\t_{k_{h}}^{(h)}\Big|_{\cc_{\rm int}}e^{-\beta_h\sqrt{k_h(k_h+2)}}\nonumber\\
&=&-\sum_{h;(i,j)\subset\partial h}\lt(C_0^{(h)}\rt)^{-1}\sum_{k_h=1}^\infty d_{k_{h}}\mathrm{Tr}_{(k_{h},\rho_{h})}\left[P_{k_{h}}\left(K^{3}K^{3}\right)P_{k_{h}}\right]e^{-\beta_h\sqrt{k_h(k_h+2)}}\nonumber\\
& =&-\sum_{h;(i,j)\subset\partial h}C_1^{(h)}+O\left(\gamma\right),\qquad \qquad C_1^{(h)}=\lt(C_0^{(h)}\rt)^{-1}\sum_{k=1}^\infty\frac{1}{6}d_k^{2}\left(d_k+1\right)e^{-\beta_h\sqrt{k(k+2)}},\label{rrdiagonal}
\ee
and the off-diagonal entries
\be
&&\sum_{h}\lt(C_0^{(h)}\rt)^{-1}\sum_{k_h=1}^\infty\frac{\partial^{2}}{\partial r_{ij}\partial r_{im}}\t_{k_{h}}^{(h)}\Big|_{\cc_{\rm int}}e^{-\beta_h\sqrt{k_h(k_h+2)}}\nonumber\\
&=&-\sum_{h;(i,j),(i,m)\subset\partial h}\lt(C_0^{(h)}\rt)^{-1}\sum_{k_h=1}^\infty d_{k_{h}}\mathrm{Tr}_{(k_{h},\rho_{h})}\left[P_{k_{h}}\left(K^{3}u_{jim}K^{3}\right)P_{k_{h}}u_{jim}^{-1}\right]e^{-\beta_h\sqrt{k_h(k_h+2)}}\nonumber\\
&=&-\sum_{h;(i,j),(i,m)\subset\partial h}C_1^{(h)}\cos\lt(\theta_{jim}\rt)+O\left(\gamma\right).\label{rroffdiagonal}
\ee
Here $\theta_{jim}$ is one of the Euler angles of $u_{jim}\in\Su$ containing the on-shell gauge freedom. For $r_{im}$ and $r_{jn}$ associate to two different vertices $i,j$
\be
\frac{\partial^{2}}{\partial r_{im}\partial r_{jn}}\t_{k_{h}}^{(h)}\Big|_{\cc_{\rm int}}=O(\g).
\ee
for any $m,n$. So the $r$-$r$ block is block-diagonal as $\g\to0$. 

\begin{lemma}
The $r$-$r$ block is nondegenerate as $\g\to0$.
\end{lemma}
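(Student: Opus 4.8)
The plan is to exploit the block structure already established and to recognize each vertex block as a Gram matrix of boost directions. First I would use the fact, stated just above the lemma, that the cross-vertex second derivatives $\partial_{r_{im}}\partial_{r_{jn}}\tau^{(h)}_{k_h}|_{\cc_{\rm int}}=O(\g)$, to conclude that as $\g\to0$ the $r$-$r$ block is \emph{exactly} block-diagonal, with one block $M^{(i)}$ per vertex $i$. Then $\det(r\text{-}r\text{ block})=\prod_i\det M^{(i)}$, so it suffices to show each $M^{(i)}$ is nonsingular. The block $M^{(i)}$ is indexed by the internal edges $(i,j)$ at $i$; by \eqref{rrdiagonal}--\eqref{rroffdiagonal} its diagonal is $-\sum_{h;(i,j)\subset\partial h}C_1^{(h)}$ and its off-diagonal between $(i,j)$ and $(i,m)$ is $-\sum_{h;(i,j),(i,m)\subset\partial h}C_1^{(h)}\cos\theta_{jim}$, each face $h$ at $i$ contributing through exactly its two edges at $i$. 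I would first record that $C_1^{(h)}>0$, since $C_0^{(h)}$ and the numerator $\sum_k\tfrac16 d_k^2(d_k+1)e^{-\beta_h\sqrt{k(k+2)}}$ are manifestly positive.

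The key step is a geometric rewriting. Differentiating $\tau^{(h)}_k$ in $r_{ij}$ inserts, through the parametrization \eqref{gPgParameter}, the rotated boost generator $u_{ij}K^3u_{ij}^{-1}=\vec K\cdot\hat N^{(i)}_j$, where $\hat N^{(i)}_j$ is the unit $3$-vector obtained by rotating $\hat z$ by $u_{ij}$. Using rotational invariance of the leading-order trace, $\mathrm{Tr}_{(k,\rho)}[P_k K^a K^b P_k]=\tfrac13\delta^{ab}\,\mathrm{Tr}_{(k,\rho)}[P_k\vec K^2 P_k]+O(\g)$, the mixed derivative is proportional to $\hat N^{(i)}_j\cdot\hat N^{(i)}_m$, which identifies $\cos\theta_{jim}=\hat N^{(i)}_j\cdot\hat N^{(i)}_m$ (the diagonal being the case $j=m$, with $|\hat N|=1$). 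Each face $h$, with edges $(i,a_h),(i,b_h)$, then contributes a $2\times2$ Gram block, and for any real vector $x=(x_j)$ this gives the manifest sum of squares
\[
-\,x^\top M^{(i)}x=\sum_{h\ni i}C_1^{(h)}\,\big|\,x_{a_h}\hat N^{(i)}_{a_h}+x_{b_h}\hat N^{(i)}_{b_h}\,\big|^2\ \ge\ 0 ,
\]
so $M^{(i)}$ is negative semi-definite.

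To upgrade semi-definiteness to definiteness I would characterize the kernel: $x\in\ker M^{(i)}$ forces $x_{a_h}\hat N^{(i)}_{a_h}+x_{b_h}\hat N^{(i)}_{b_h}=0$ for every face $h$ at $i$. Since the $\hat N$ are unit vectors, as soon as a face has non-collinear boost directions this yields $x_{a_h}=x_{b_h}=0$; propagating across faces that share an edge, using connectivity of the vertex figure (the graph whose vertices are the edges at $i$ and whose links are the faces at $i$), then forces $x\equiv0$. Equivalently, since $|\cos\theta_{jim}|\le1$, $M^{(i)}$ is weakly diagonally dominant with negative diagonal, and Taussky's theorem on irreducibly diagonally dominant matrices gives nonsingularity once the vertex figure is connected and at least one face is strictly dominant, i.e. has $|\cos\theta_h|<1$.

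The hard part is thus not the algebra but the geometric input: ensuring the boost directions $\{\hat N^{(i)}_j\}$ at each vertex are in sufficiently general position (pairwise non-collinear along faces, with connected vertex figure). Because $\cc_{\rm int}$ here is a pure gauge orbit (trivial $\pi_1(|\ck|)$), these directions are carried by the free on-shell data $u_{ij}$ and are generically distinct, so the strict inequality holds on a dense open subset of $\cc_{\rm int}$; the collinear locus where some $|\cos\theta_h|=1$ is measure zero and does not affect the stationary-phase estimate \eqref{AKexponen}. I would close by invoking the standing assumption that $\ck$ is connected and sufficiently refined, which guarantees connectivity of each vertex figure together with the existence of faces with non-degenerate angles.
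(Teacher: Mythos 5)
There is a genuine gap, and it sits in your definiteness step. Your semi-definite decomposition is the same as the paper's: your face-wise square $\bigl|x_{a_h}\hat N^{(i)}_{a_h}+x_{b_h}\hat N^{(i)}_{b_h}\bigr|^2$ is exactly the paper's $T_h(r)=r_{a_h}^2+r_{b_h}^2+2r_{a_h}r_{b_h}\cos\theta$, so that part is a correct (and pleasant) geometric rewriting. But to kill the kernel you require a face at each vertex with non-collinear boost directions, $|\cos\theta_h|<1$, and you then retreat to a statement that holds only on a dense open subset of $\cc_{\rm int}$, dismissing the collinear locus as measure zero. That is weaker than the lemma and is not enough for its purpose: the Hessian must be nondegenerate at \emph{every} point of the critical manifold, because the factor $\phi(\vec u)$ in \eqref{AKexponen} behaves like $(\det\mathbb{H})^{-1/2}$. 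If $\mathbb{H}$ degenerated on a subset of $\cc_{\rm int}$, then $\phi$ would blow up there, the stationary-phase expansion would lose uniformity, and the leading power of $\bA$ could change; ``measure zero'' does not rescue an asymptotic expansion over a critical manifold.

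The missing idea is the $\Slc$ gauge fixing $r_{e_0(v)}=0$, which you never invoke. Since $r_{e_0}$ is not a transverse coordinate, each face $h$ at the vertex containing $e_0$ contributes a pure square $T_h=r_e^2$, where $e$ is the other edge of $h$ at $v$, with no angle appearing at all. Hence $Q(r)=0$ forces $r_e=0$ for every edge sharing a face with $e_0$, and then, by induction across faces sharing an edge (the same vertex-figure connectivity you already use), $r_e=0$ propagates to all edges, \emph{for arbitrary values of the angles} $\theta_{jim}$. This is the paper's argument; it gives pointwise nondegeneracy on all of $\cc_{\rm int}$ with no genericity assumption, so the collinear locus you worry about is not actually a degeneracy locus. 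Note also that if you instead include $r_{e_0}$ as a variable (your phrase ``indexed by the internal edges $(i,j)$ at $i$'' suggests this), the block is genuinely degenerate: the boost part of the $\Slc$ gauge freedom at $v$ produces exact zero modes $\delta r_e\propto \vec n\cdot\hat N^{(i)}_e$, so in that reading your generic-nondegeneracy claim would be false rather than merely weak. Finally, your Taussky-style alternative can be repaired by the same observation: a face containing $e_0$ contributes to the diagonal of the other edge's row but to no off-diagonal entry, so those rows are strictly diagonally dominant automatically, and irreducibility plus this strict dominance closes the argument without assuming $|\cos\theta_h|<1$.
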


\begin{proof}
The $r$-$r$ block is nondegenerate if and only if every small blocks associated to a vertex is nondegenerate. Focus on a single vertex $v$ and define a quadratic form $Q(r)=\sum_{e,e'}\mathbf{M}_{e,e'}r_e r_{e'}$ with $r_e,r_{e'}\in\R$, where $e,e'$ are edges connecting to $v$ but  not connecting to the boundary. The matrix $\mathbf{M}$ has the diagonals $\mathbf{M}_{e,e} = -\sum_{h; e\subset \partial h} C^{(h)}_1$ and off diagonals $\mathbf{M}_{e,e'} = -\sum_{h; e,e'\subset \partial h} C^{(h)}_1\cos(\theta_{e,e'})$. The quadraic can be written as
\be
Q(r)=-\sum_{h,v\in \partial h} C^{(h)}_1 T_h(r),\qquad T_h(r)=r_{e_1(h)}^2+r_{e_2(h)}^2+2r_{e_1(h)} r_{e_2(h)}\cos(\theta_{e,e'}),
\ee
where $e_1(h),e_2(h)\subset\partial h$ are the pair of edges connecting the vertex $v$. For any $r_e\in\R$, we have $T_h(r)\geq 0$, and it implies $Q(r)\leq 0$. Moreover, $Q(r)= 0$ if and only if $T_{h}(r)=0$. By gauge fixing $r_{e_0}=0$ for one edge $e_0$, $T_{h}(r)=0$ implies $r_e=0$ for all $e$ sharing an internal face $h$ with $e_0$. By induction, $r_e=0$ propagates to all edges $e$
\footnote{In the same way as the proof of Theorem \ref{SU2flatconn}, this argument would become invalid if $\partial v^*$ contained some boundary polyhedra in $\partial\ck^*$ such that their complement in $\partial v^*$ is disconnected. But if we assume $\ck^*$ is either simplicial or sufficiently refined, this obstruction cannot happen.
}. 
That $Q(r)=0$ implies $r=0$ under gauge fixing indicates that the small block associated to $v$ is nondegenerate, so the $r$-$r$ block is nondegenerate.
\end{proof}


For the $t$-$t$ block of the Hessian, we have the diagonal entries
\be
&&\sum_{h}\lt(C_0^{(h)}\rt)^{-1}\sum_{k_h=1}^\infty\frac{\partial^{2}}{\partial t_{ij}^{a}\partial t_{ij}^{b}}\t_{k_{h}}^{(h)}\Big|_{C_{\rm int}}e^{-\beta_h\sqrt{k_h(k_h+2)}}\nonumber\\
&=&-\sum_{h;(i,j)\subset\partial h}\lt(C_0^{(h)}\rt)^{-1}\sum_{k_h=1}^\infty d_{k_{h}}\mathrm{Tr}_{k_{h}}\left[L^{a}L^{b}\right]e^{-\beta_h\sqrt{k_h(k_h+2)}}\nonumber\\
&=&-\sum_{h;(i,j)\subset\partial h}C_2^{(h)}\delta^{ab},\qquad 
C^{(h)}_2=\lt(C_0^{(h)}\rt)^{-1}\sum_{k=1}^\infty \frac{1}{12}d_k^{2}(d_k-1)\left(d_k+1\right)e^{-\beta_h\sqrt{k(k+2)}}
\ee
and the off-diagonal entries
\be
&&\sum_{h}\lt(C_0^{(h)}\rt)^{-1}\sum_{k_h=1}^\infty\frac{\partial^{2}}{\partial t_{ij}^{a}\partial t_{mn}^{b}}\t_{k_{h}}^{(h)}\Big|_{\cc_{\rm int}}e^{-\beta_h\sqrt{k_h(k_h+2)}}\nonumber\\
&=&-\sum_{h;(i,j),(m,n)\subset\partial h}s_{ij}(h)s_{mn}(h)\lt(C_0^{(h)}\rt)^{-1}\sum_{k_h=1}^\infty d_{k_{h}}\mathrm{Tr}_{k_{h}}\left[L^{a}L^{b}\right]e^{-\beta_h\sqrt{k_h(k_h+2)}}\nonumber\\
&=&-\sum_{h;(i,j),(m,n)\subset\partial h}s_{ij}(h)s_{mn}(h)C_2^{(h)}\delta^{ab},
\ee
where $(i,j)$ and $(m,n)$ are not along the maximal spanning tree $\ct$. The sign $s_{ij}(h)=1$ if the orientation of the edge $(i,j)$ (for defining $H_{ij}$) aligns with the orientation of $\partial h$, otherwise $s_{ij}(h)=-1$. Unlike $r_{ij}$ which only associates to the vertex $i$, $t_{ij}$ relates to both vertices $i$ and $j$. So, the $t$-$t$ block of the Hessian is not block diagonal. The $t$-$t$ block only relates to the topological properties of the root complex $\ck$.

\begin{lemma}

The $t$-$t$ block is non-degenerate.

\end{lemma}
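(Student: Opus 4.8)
The plan is to exploit the explicit $3\times3$ Kronecker structure already exhibited in the entries above: both the diagonal and off-diagonal $t$-$t$ matrix elements carry the factor $\delta^{ab}$, so the whole block has the form $\mathbb{H}^{tt}=-\,M\otimes\delta^{ab}$, where $M$ is the symmetric matrix indexed by the internal chords $c=(i,j)\notin\ct$ with entries $M_{c,c'}=\sum_{h}s_{c}(h)\,s_{c'}(h)\,C_2^{(h)}$ (using the convention $s_c(h)=0$ when $c\not\subset\partial h$, and noting $s_c(h)^2=1$ reproduces the diagonal). Since the $3\times3$ identity factor is invertible, $\det\mathbb{H}^{tt}=\pm(\det M)^3$, so non-degeneracy of the $t$-$t$ block is equivalent to non-degeneracy of $M$. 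First I would reduce the lemma to the single statement that $M$ has trivial kernel.

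Next I would write $M$ as a weighted Gram matrix. Introducing for each internal face $h$ the chord vector $w_h\in\R^{C}$ with components $(w_h)_c=s_c(h)$, one has $M=\sum_h C_2^{(h)}\,w_h w_h^{\mathsf T}$. The weights are strictly positive: from $C_2^{(h)}=(C_0^{(h)})^{-1}\sum_{k\geq1}\tfrac1{12}d_k^2(d_k-1)(d_k+1)e^{-\beta_h\sqrt{k(k+2)}}$ together with $d_k=k+1\geq2$ for $k\geq1$, every summand is positive and $C_0^{(h)}>0$, so $C_2^{(h)}>0$. Hence $M$ is positive semidefinite with $\ker M=\bigcap_h\ker(w_h^{\mathsf T})=\{t:\sum_{c\subset\partial h}s_c(h)\,t_c=0\ \forall h\}$, and $M$ is non-degenerate if and only if the vectors $\{w_h\}$ span $\R^{C}$.

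The key step is to identify the $w_h$ with the boundary cycles of the faces and to invoke the triviality of $\pi_1(|\ck|)$. Each $w_h$ is precisely the chord-coordinate image of the cellular boundary $\partial_2 h\in C_1$; because $\partial_1\partial_2=0$, $\partial_2 h$ is a $1$-cycle, and since the spanning tree $\ct$ carries no cycles, projection onto chord coordinates is an isomorphism from the cycle space $Z_1$ onto $\R^{C}$ (both have dimension $|E|-|V|+1$). Thus $\{w_h\}$ span $\R^{C}$ iff $\{\partial_2 h\}$ span $Z_1$, i.e.\ iff $\operatorname{im}\partial_2=Z_1$, i.e.\ iff $H_1(|\ck|;\R)=0$; and $\pi_1(|\ck|)=\{1\}$ gives $H_1(|\ck|;\R)=0$ by Hurewicz. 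Conceptually this is the infinitesimal counterpart of Theorem \ref{SU2flatconn}: $\ker M$ is exactly the space of linearized flat connections on the chords in the tree gauge, i.e.\ the tangent space to the chord directions of $\cm(\ck)$, which collapses to a point for trivial $\pi_1(|\ck|)$. Therefore $\ker M=\{0\}$, $M$ is positive definite, and the $t$-$t$ block is non-degenerate.

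I expect the main obstacle to be the bookkeeping that makes the identification $w_h\leftrightarrow\partial_2 h$ exact: one must verify that the orientation signs $s_{ij}(h)$ used to define $H_{ij}$ and $\partial h$ coincide with the incidence coefficients of the cellular boundary map, and one must treat the internal subcomplex consistently—choosing a spanning tree adapted to $E_{\rm int}$, counting only internal chords against internal faces, and confirming that boundary edges (integrated separately in $\sa_{\G,\fs}$) do not enter $M$. The connectedness and sufficient-refinement hypotheses on $\ck$ already used in Theorem \ref{SU2flatconn} are precisely what guarantee that the internal cellular chain complex computes $H_1(|\ck|;\R)$, so that the count of chords against face relations is tight and the kernel is genuinely the infinitesimal moduli.
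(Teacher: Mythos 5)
Your proof is correct and takes essentially the same route as the paper: both reduce the block to a scalar matrix tensored with $\bm{1}_{3\times 3}$, write it as a sum over internal faces of rank-one (squared) terms with strictly positive weights $C_2^{(h)}$, and then kill the kernel using the triviality of first (co)homology implied by $\pi_1(|\ck|)=\{1\}$. The only difference is that you phrase the topological step homologically (face boundaries span the cycle space $Z_1$, identified with chord coordinates via the spanning-tree isomorphism, using $H_1=0$), whereas the paper phrases the dual statement cohomologically (a kernel vector extended by zero on $\ct$ is a cocycle, hence a coboundary by $H^1(\ck,\R)=0$, hence zero since it vanishes on the tree) --- two dual formulations of the same fact, and your flagged caveat about restricting to the internal subcomplex applies equally to the paper's own argument.
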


\begin{proof}

The $t$-$t$ block can be expressed as $\mathbf{H}_{tt}\otimes \bm{1}_{3\times 3}$. We define the quadratic form $Q(x)=x^T\mathbf{H}_{tt} x$, where $x$ is the real vector $x=(x_e)_{e\in E_{\rm NT}}$ with $E_{\rm NT}$ being the set of edges not along $\ct$. This quadratic form can be written as
\be
Q(x)=-\sum_h C_2^{(h)}\lt(\sum_{e\in E_{\rm NT},e\subset\partial h}s_e(h)x_e\rt)^2.
\ee
Since $C_2^{(h)}>0$, the quadratic form the negative semi-definite $Q(x)\leq0$. To prove that $\mathbf{H}_{tt}$ is non-degenerate, we need to show that $Q(x)=0$ if and only if $x=0$: The condition $Q(x)=0$ implies
\be
\sum_{e\in E_{\rm NT},e\subset\partial h}s_e(h)x_e=0,\qquad \forall h.
\ee
We interpret these equations in the language of algebraic topology. We define a 1-cochain $\a\in C^1(\ck,\R)$ by $\a(e)=x_e$ for $e\in E_{\rm NT}$ and $\a(e)=0$ for $e\subset\ct$. The above equations is precisely the condition that $\a$ is cocycle: $\delta_1\a =0$ where $\delta_1$ is the coboundary operator. If the cohomology group $H^1(\ck,\R)$ is trivial, we have $\a =\delta_0 \beta$ for some $\b\in C^0(\ck,\R)$, i.e. $\a(e)=\beta(j)-\b(i)$ for $e=(i,j)$. Then $\a(e)=0$ for $e\subset\ct$ implies $\b$ is constant on all vertices of $\ck$. Therefore, we obtain $\a=0$ and thus $x=0$.

The trivial cohomology group $H^1(\ck,\R)$ is a consequence from the trivial $\pi_1(|\ck|)$, because a trivial $\pi_1(|\ck|)$ implies the trivial $H_1(\ck,\Z)$ and thus trivial $H_1(\ck,\R)$ and $H^1(\ck,\R)=\mathrm{Hom}(H_1(\ck,\R),\R)$.

\end{proof}

At the leading order of small $\g$, the $t$-$t$ block is a constant on the critical manifold $\cc_{\rm int}$, while the $r$-$r$ block is not constant due to the dependence on $\theta_{jim}$. Given that the Hessian matrix is non-degenerate in the limit $\g\to0$, it is still non-degenerate for a generic value of $\g$, in particular for small $\g$.

Let us consider two simple examples: The first example uses the root complex $\ck=\Delta_3^*$, which is made by three vertices and a single triangular internal face $h$ \cite{Han:2021kll}. There are 3 edges along $\partial h$, and 2 edges are in the spanning tree $\ct$, so there is only one $H_{ij}$ not along $\ct$. The $t$-$t$ block is $3\times 3$ given by $C_2^{(h)}\bm{1}_{3\times 3}$.

As the second example, let us consider the root complex $\ck=\sig_{\text 1-5}^*$ being the 2-complex dual to the 1-5 pachner move of 4-simplex \cite{Han:2021kll}. The complex $\sig_{\text 1-5}^*$ has five vertices and ten triangular faces $h$. The spanning tree $\ct$ contains four edges connecting the vertex $1$ to other four vertices. The $t$-$t$ block is $18\times 18$ non-degenerate matrix given by
\be
C_2\left(
\begin{array}{cccccc}
 -3 & 1 & 1 & -1 & -1 & 0 \\
 1 & -3 & 1 & 1 & 0 & -1 \\
 1 & 1 & -3 & 0 & 1 & 1 \\
 -1 & 1 & 0 & -3 & 1 & -1 \\
 -1 & 0 & 1 & 1 & -3 & 1 \\
 0 & -1 & 1 & -1 & 1 & -3 \\
\end{array}
\right)\otimes \bm{1}_{3\times 3}.
\ee
We have assumed that $\l_h=\l$ was constant among all $h$, then $\b_h=\b$ and $C_2^{(h)}=C_2$ were also constant.

\begin{acknowledgements}

The author receives supports from the National Science Foundation through grants PHY-2207763 and PHY-2512890. 

\end{acknowledgements}

\appendix


\section{Analyticity of $s_h(g)$}\label{Analyticity of shg}

Viewing $\Slc$ as a real manifold, $s_h(g)$ has the following analytic property
\begin{lemma}

$s_h(g)$ is analytic in a neighborhood of the solution to \eqref{g0solution}.

\end{lemma}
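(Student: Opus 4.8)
The plan is to realize $s_h(g)$ as an implicit function and invoke the analytic implicit function theorem. Rewriting the pole equation \eqref{poleeqn} as $F(g,s)=0$ with
\[
F(g,s)=\sum_{k=1}^\infty \zeta_k^{(h)}(g)\,e^{-s\sqrt{k(k+2)}}-1,
\]
the quantity $s_h(g)$ is the root in $s$ of largest real part. On the solution set of \eqref{g0solution} we have $s_h=\b_h$, which (by the Theorem preceding \eqref{omegah31}) is a simple and dominant root. Two ingredients then suffice: (i) $F$ is jointly analytic in the real coordinates of $g\in\Slc$ and the complex variable $s$ near $\cc_{\rm int}$; and (ii) $\partial_s F\neq0$ on $\cc_{\rm int}$.

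For ingredient (i), I would first note that each coefficient $\zeta_k^{(h)}(g)$ is the trace of a finite product of the $P_k$-sandwiched operators $P_kg_{ve}^{-1}g_{ve'}P_k$; since $P_k$ projects onto the finite-dimensional subspace $\ch_k\subset\ch_{(k,\rho)}$, this is a finite sum of matrix coefficients of the principal series, which are real-analytic functions on $\Slc$ viewed as a real manifold. Hence every summand $\zeta_k^{(h)}(g)\,e^{-s\sqrt{k(k+2)}}$ is analytic in $(g,s)$. To transfer analyticity to the sum I would establish uniform convergence on a complex neighborhood: extending the real group coordinates into a polydisc of radius $\epsilon$, the analytic continuation of $\zeta_k^{(h)}$ should be bounded by $\l_h d_k^2\,e^{c(\epsilon)\sqrt{k(k+2)}}$, where $c(\epsilon)\to0$ as $\epsilon\to0$, the growth arising from continuing the boost factors $e^{-irK^3}$ to complex $r$ and being controlled within the $P_k\cdots P_k$ sandwich. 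Since $\re(s)=\b_h>0$ strictly on $\cc_{\rm int}$, a sufficiently small $\epsilon$ keeps $\re(s)-c(\epsilon)$ bounded below by a positive constant throughout the neighborhood, so the series converges absolutely and uniformly; the Weierstrass theorem then makes the limit $F$ analytic.

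For ingredient (ii), on $\cc_{\rm int}$ the equality case of Lemma \ref{zetabound} gives $\zeta_k^{(h)}=\l_h d_k^2$ and $s_h=\b_h$, so
\[
\partial_s F\big|_{\cc_{\rm int}}=-\l_h\sum_{k=1}^\infty d_k^2\sqrt{k(k+2)}\,e^{-\b_h\sqrt{k(k+2)}}<0,
\]
strictly, because every term is positive; in particular $\partial_s F\neq0$. Treating $F=F_1+\I F_2$ and $s=s_1+\I s_2$ as a real-analytic map $\R^n\times\R^2\to\R^2$, the Cauchy--Riemann relations identify the Jacobian in $(s_1,s_2)$ with $|\partial_s F|^2>0$. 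The real-analytic implicit function theorem then yields a unique real-analytic solution $s=s_h(g)$ in a neighborhood of the solution to \eqref{g0solution}; by continuity it equals $\b_h$ on $\cc_{\rm int}$ and remains the isolated simple root of largest real part nearby, so it coincides with the dominant pole.

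The main obstacle is the uniform-convergence estimate on the complex neighborhood, namely controlling the analytic continuation of $\zeta_k^{(h)}(g)$ as $k\to\infty$. This is delicate because the representation label $\rho=\g(k+2)$ itself grows with $k$, and continuation along the boost generator $K^3$ produces exponential factors of order $e^{\epsilon k}$; the essential point is that the $P_k\cdots P_k$ sandwich tames these factors, and for a small enough neighborhood the induced rate $c(\epsilon)$ stays strictly below the decay rate $\b_h\sqrt{k(k+2)}$ inherited from the critical value. Once this bound is secured, analyticity of the sum and the application of the implicit function theorem are routine.
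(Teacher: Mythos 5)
Your proposal has the same skeleton as the paper's proof: both define $F(g,s)=\sum_{k=1}^\infty\zeta_k^{(h)}(g)\,e^{-s\sqrt{k(k+2)}}-1$, both verify $\partial_s F=-\l_h\sum_{k=1}^\infty d_k^2\sqrt{k(k+2)}\,e^{-\b_h\sqrt{k(k+2)}}\neq 0$ on the solution set of \eqref{g0solution} via the equality case of Lemma \ref{zetabound}, and both finish with the analytic implicit function theorem. The divergence is in how joint analyticity of $F$ is established, and this is where your write-up has a genuine hole. The paper proves termwise analyticity of $\zeta_k^{(h)}$ by citing Harish-Chandra's analyticity theorem (canonical basis vectors are $K$-finite, hence weakly analytic, so the $\Slc$ $D$-functions are real-analytic on the group as a real manifold) --- a fact you merely assert --- and then concludes analyticity of the sum from uniform convergence on the \emph{real} domain, using the $g$-uniform bound $|\zeta_k^{(h)}(g)|\leq\l_h d_k^2$ of Lemma \ref{zetabound} against $e^{-r_0\sqrt{k(k+2)}}$ for $\re(s)\geq r_0>0$; no complexification of the group variables appears anywhere. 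You instead demand a bound on the holomorphic continuation of $\zeta_k^{(h)}$ into a complex polydisc, of the form $\l_h d_k^2\,e^{c(\epsilon)\sqrt{k(k+2)}}$ with $c(\epsilon)\to 0$, and you explicitly leave it unproven. Since controlling principal-series matrix elements with $\rho=\g(k+2)$ growing with $k$ under complexified boosts is precisely the hard analytic content of that route, your proof as written is incomplete at its self-identified crux.

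To be fair, the subtlety you worry about is real: uniform convergence of a series of real-analytic functions on a real domain does not in general imply real-analyticity of the limit (this is only automatic in the genuinely complex variable $s$, where Weierstrass's theorem applies), so the paper's one-line passage from uniform convergence to analyticity in the $g$-variables is itself terse, and the honest repair is exactly the complexified estimate you call for. But flagging the right subtlety is not the same as resolving it. You have two ways to finish: either adopt the paper's route (Harish-Chandra plus real-domain dominated convergence from Lemma \ref{zetabound}), which yields the lemma at the paper's level of rigor, or actually prove your continuation bound --- e.g.\ from an explicit realization of the principal series, where shifting a boost parameter by an imaginary amount $\epsilon$ should multiply the relevant matrix elements by factors of order $e^{O((1+\g)\epsilon)k}$, so that $c(\epsilon)$ stays strictly below $\re(s)\approx\b_h$ on a small enough neighborhood. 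Until one of these is carried out, what you have is a correct strategy with its central estimate missing.
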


\begin{proof} First of all, the canonical basis vector $|(k,\rho), k ',m\rangle$ is $K$-finite, where $K=\Su$ is the maximal compact subgroup of $\Slc$, then Harish-Chandra's analyticity theorem\footnote{Harish-Chandra's analyticity theorem states that if $V$ carries a unitary irrep of a semisimple Lie group $G$ with maximal compact subgroup $K$ then every $K$-finite vector is a weakly analytic vector. A vector $v\in V$ is $K$-finite if it is contained in a finite-dimensional subrepresentation of $K$. A vector $v\in V$ is weakly analytic vector if $\langle u |g| v\rangle$ ($g\in G$) is analytic on $G$ (as a real manifold) for any $u\in V$.} implies that it is weakly analytic, so the Wigner $D$-function of the $\Slc$ unitary irrep $D^{(k,\rho)}_{k_1 m_1,k_2 m_2}(g)=\langle (k,\rho), k_1,m_1 | g | (k,\rho), k_2,m_2\rangle$ is an analytic function on $\Slc$. Consequently, $\zeta_k^{(h)}(g)$ is an analytic function of $g_{ve}$'s for each $k$, since it is a polynomial of the $D$-functions.

Each term in the sum $\sum_{k=1}^\infty\zeta^{(h)}_k(g) e^{- s_h\sqrt{k(k+2)}}$ is analytic in $g_{ve}$ and $s_h$. The sum converges uniformly for $\re(s_h)>0$ by the bound $\sum_{k=1}^\infty\lt|\zeta^{(h)}_k(g) e^{- s_h\sqrt{k(k+2)}}\rt|\leq \l_h\sum_{k=1}^\infty d_k^{2}e^{- \re(s_h)\sqrt{k(k+2)}}\leq \l_h\sum_{k=1}^\infty d_k^{2}e^{- r_0\sqrt{k(k+2)}}<\infty$ for $\re(s_h)\geq r_0>0$. Therefore $F(g,s)\equiv \sum_{k=1}^\infty\zeta^{(h)}_k(g) e^{- s_h\sqrt{k(k+2)}}-1$ is an analytic function of $g_{ve}$ and $s_h$.

We denote by $g_0$ a solution to \eqref{g0solution}. $\partial_s F(g_0,\b_h)=-\l_h\sum_{k=1}^\infty  d^2_k \sqrt{k(k+2)} e^{- \b_h\sqrt{k(k+2)}}\neq 0$. By the analytic implicit function theorem, there exists a unique, analytic function $s_h(g)$ defined in an open neighborhood of $g_0$ that satisfies $F(g,s_h)=0$.

\end{proof}

\section{Complete the proof of Theorem \ref{SU2flatconn}}\label{proofTheoremIII3}

The purpose of this appendix is to prove
\be
\mathcal{M}(\ck) \cong \text{Hom}(\pi_1(|\ck|), \text{SU(2)}) / \text{SU(2)}.
\ee

We define a map $\Phi: \mathcal{M}(\ck) \to \text{Hom}(\pi_1(|\ck|), \text{SU(2)}) / \text{SU(2)}$: For a gauge equivalence class $[g]\in\cm(\ck)$, pick a representative $\{g_e\}$ and a base vertex $v_0$. This defines a homomorphism $\rho_{\{g_e\}}: \pi_1(|\ck|) \to \text{SU(2)}$ by the loop holonomies made by $\{g_e\}$. This is well-defined due to the flatness of $[g]$ and $\pi_1(|\ck|)=\pi_1(\mathrm{sk}(\ck))/N_{\rm int}$ where $N_{\rm int}$ is the normal subgroup generated by all loops around internal faces. A different representative $\{g'_e\}$ related by a gauge transformation $\{h_v\}$ yields a conjugate representation $\rho_{\{g'_e\}} = h_{v_0} \rho_{\{g_e\}} h_{v_0}^{-1}$. Thus, $\Phi([g]) = [\rho_{\{g_e\}}]$ maps to the space of conjugacy classes.

To show $\Phi$ is an isomorphism, we construct its inverse $\Psi: \text{Hom}(\pi_1(|\ck|), \text{SU(2)}) / \text{SU(2)}\to \mathcal{M}(\ck)$: Let $[\rho]$ be a conjugacy class of representations. Pick a representative $\rho: \pi_1(|\ck|) \to \text{SU(2)}$. We construct a flat lattice connection $\{g_e^\rho\}$: (1) Choose a maximal spanning tree $\ct$ in the 1-skeleton of $\ck$. (2) For each edge $e \in \ct$, set $g_e^\rho = I$. (3)  For each edge $e=(u,v) \notin \ct$, there is a unique fundamental loop $\ell_e$ based at $v_0$ formed by the path $P_{v_0,u}$ in $T$, the edge $e$, and the path $P_{v,v_0}$ in $T$. All other loops in $\mathrm{sk}(\ck)$ are products of fundamental loops. Define 
\be
g_e^\rho = \rho([\ell_e]).\label{treegauge}
\ee
This assignment $\{g_e^\rho\}$ is flat. For any internal face $h$, the homotopy class $[\partial h]$ is the identity element in $\pi_1(|\ck|)$. Therefore, the holonomy around the face is $\rho([\partial h]) = I$, satisfying the flatness condition.

The map $\Psi([\rho]) = [\{g_e^\rho\}]$ is well-defined: If we choose a different representative $\rho' = h \rho h^{-1}$, the new connection is $\{g_e^{\rho'}\}$. For $e \in \ct$, $g_e^{\rho'} = I$. For $e \notin \ct$, $g_e^{\rho'} = \rho'([\ell_e]) = h \rho([\ell_e]) h^{-1} = h g_e^\rho h^{-1}$. Consider the constant gauge transformation $h_v = h$ for all vertex $v \in V$ (we denote by $V$ the set of vertices in $\ck$). The transformed connection is $h_{t(e)} g_e^\rho h_{s(e)}^{-1} = h g_e^\rho h^{-1}$. This equals $g_e^{\rho'}$ for $e \notin \ct$, and for $e \in \ct$, $h I h^{-1} = I = g_e^{\rho'}$. Thus, $\{g_e^{\rho'}\}$ is gauge-equivalent to $\{g_e^\rho\}$.

Finally, we show $\Phi$ and $\Psi$ are inverses. $\Phi(\Psi([\rho])) = [\rho]$ by construction. Then we must show that for any $[g]\in\cm(\ck)$, $\Psi(\Phi([g])) = [g]$. Let $\{g_e\}$ be a representative of $[g]$. Let $\rho = \Phi([\{g_e\}])$ be its holonomy representation. Let $\{g'_e\} = \Psi([\rho])$ be the connection constructed from $\rho$ in the "tree gauge" (where $g'_e=I$ for $e \in T$) by \eqref{treegauge}. We need to show that $\{g_e\}$ is gauge-equivalent to $\{g'_e\}$.

Let's construct the required gauge transformation $h: V \to \text{SU(2)}$. For the base vertex $v_0$, set $h_{v_0} = I$. For any other vertex $v$, let $P_{v_0,v}$ be the unique path in the spanning tree $T$ from $v_0$ to $v$. Define $h_v ^{-1}= \mathrm{Hol}_{\{g_e\}}(P_{v_0,v})\equiv \mathrm{Hol}(P_{v_0,v})$ being the holonomy along the path $P_{v_0,v}$ made by $g_e$. Let $\{g''_e\}$ be the result of applying this gauge transformation to $\{g_e\}$: $g''_e = h_{t(e)} g_e h_{s(e)}^{-1}$. We show $g''_e = g'_e$ for all $e \in E$:

\begin{itemize}
\item $e=(u,v) \in \ct$. We have $\mathrm{Hol}(P_{v_0,v}) = g_e \mathrm{Hol}(P_{v_0,u})$, so $h_v^{-1} = g_e h_u^{-1}$. The gauge-transformed holonomy is $g''_e = h_v g_e h_u^{-1} = (h_u g_e^{-1}) g_e h_u^{-1} = I$. This matches $g'_e = I$.

\item $e=(u,v) \notin T$. The constructed holonomy is $g'_e = \rho([\ell_e])$. The loop is $\ell_e = P_{v,v_0} \circ e \circ P_{v_0,u}$. The holonomy of this loop in the original configuration $\{g_e\}$ is $\rho([\ell_e]) = {\rm Hol}(P_{v,v_0}) g_e {\rm Hol}(P_{v_0,u})$. We have ${\rm Hol}(P_{v_0,u}) = h_u^{-1}$ and $\mathrm{Hol}(P_{v,v_0}) = h_v$. By \eqref{treegauge}, we obtain $g'_e = h_v g_e h_u^{-1}=g''_e$.
\end{itemize}
Since $g''_e = g'_e$ for all edges $e$, the configuration $\{g'_e\}$ is gauge-equivalent to $\{g_e\}$. Thus $\Psi(\Phi([g])) = [g]$, and the isomorphism is proven.

\bibliography{muxin.bib}

\end{document}